\newcommand{\upA}{A{\uparrow}}
\newcommand{\ExpTime}{\textsc{ExpTime}\xspace}
\newcommand{\Gm}{\mathsf{G}}
\newcommand{\hearts}{\heartsuit}
\newcommand{\Set}{\mathsf{Set}}
\newcommand{\lsem}{\llbracket}
\newcommand{\rsem}{\rrbracket}
\newcommand{\Pow}{\mathcal{P}}
\newcommand{\Land}{\bigwedge}
\newcommand{\Lor}{\bigvee}
\newcommand{\entails}{\vdash}
\newcommand{\FA}{\mathfrak{A}}
\newcommand{\CondArrow}{\Rightarrow}
\newcommand{\CF}{\mathit{Cf}}
\newcounter{blubber}
\newenvironment{myenumerate}
{\begin{enumerate}
\setlength{\itemsep}{0pt}
    \setlength{\leftmargin}{0pt}
    \setlength{\itemindent}{0pt}
}
{\end{enumerate}}
\newcommand{\PLentails}{\entails_{\mi{PL}}}
\newcommand{\mi}[1]{\mathit{#1}}
\newcommand{\Sem}[1]{{[\![#1]\!]}}
\newlength{\croutw}
\newlength{\crouth}
\newcommand{\crossout}[1]%
        {\settowidth{\croutw}{$#1$}\settoheight{\crouth}{$#1$}#1%
        \hspace{-1.0\croutw}\raisebox{0.3\crouth}{\rule{\croutw}{0.1ex}}}
\newcommand{\commentout}[1]{\ignorespaces}
\newcommand{\infrule}[2]{\frac{#1}{#2}}
\newcommand{\bit}{\begin{itemize}}
\newcommand{\eit}{\end{itemize}}
\newenvironment{rmenumerate}%
        {\begin{enumerate}}%
        {\end{enumerate}}
        {\begin{enumerate}}%
        {\end{enumerate}}
        {\begin{enumerate}}%
        {\end{enumerate}}
        {\begin{enumerate}}%
        {\end{enumerate}}
\newcommand{\rightqed}{\hfill\mbox{}\qed}
\newcommand{\Cls}{\mathcal}
\newcommand{\Op}{{op}}
\newcommand{\lrule}[3]{(#1)\;\;\infrule{#2}{#3}}
\newcommand{\CM}{{\Cls M}}
\newcommand{\CO}{{\Cls O}}
\newcommand{\CT}{{\Cls T}}
\newcommand{\mystrut}[1]{\rule[#1]{0cm}{0.1cm}}
\newcommand{\dualBox}{\dual{\Box\mystrut{3pt}}}
\newcommand{\id}{{id}}
\newcommand{\powerset}{{\mathcal P}}
\newcommand{\modimpl}{\to}
\newcommand{\modiff}{\leftrightarrow}
\newlength{\myboxwidth}
\newcommand{\Lang}{\mathcal{L}}	
\newcommand{\FLang}{\mathcal{F}}
\newcommand{\plbox}[1]{[#1]}
\newcommand{\gldiamond}[1]{\Diamond_{#1}}
\newcommand{\glbox}[1]{\square_{#1}}
\newcommand{\Nat}{{\mathbb{N}}}
\newcommand{\Rat}{{\mathbb{Q}}}
\newcommand{\PDist}{\mathcal{D}}
\newcommand{\Bag}{\mathcal{B}}
\newcommand{\Prop}{\mathsf{Prop}}
\newcommand{\Rules}{\mathcal{R}}
\newcommand{\contrapow}{\mathcal{Q}}
\newcommand{\EXP}{\textsc{ExpTime}\xspace}
\newcommand{\CK}{\mathit{CK}}
\newcommand{\Space}{\mathcal{S}}
\newcommand{\ATLdiamond}[1]{{\langle\!\langle#1\rangle\!\rangle}}
\newcommand{\until}{U}
\newcommand{\dual}[1]{\overline{{#1}}}
\newcommand{\CA}{\mathcal{A}}
\newcommand{\FL}{\mathit{FL}}
\newcommand{\eat}[1]{}
\begin{document}

\markboth{L.\ Schröder and Y.\ Venema}{Completeness of Flat
  Coalgebraic Fixpoint Logics} \title{Completeness of Flat Coalgebraic Fixpoint
  Logics} \author{LUTZ SCHRÖDER \affil{Friedrich-Alexander-Universität
    Erlangen-Nürnberg} YDE VENEMA \affil{ILLC, Universiteit van
    Amsterdam} }

\begin{abstract}
\noindent Modal fixpoint logics traditionally play a central role in
computer science, in particular in artificial intelligence and
concurrency. The $\mu$-calculus and its relatives are among the
most expressive logics of this type. However, popular fixpoint
logics tend to trade expressivity for simplicity and readability,
and in fact often live within the single variable fragment of the
$\mu$-calculus. The family of such \emph{flat} fixpoint logics
includes, e.g., LTL, CTL, and the logic of common
knowledge. Extending this notion to the generic semantic framework
of \emph{coalgebraic logic} enables covering a wide range of logics
beyond the standard $\mu$-calculus including, e.g., flat fragments
of the graded $\mu$-calculus and the alternating-time $\mu$-calculus
(such as alternating-time temporal logic ATL), as well as
probabilistic and monotone fixpoint logics. We give a generic proof
of completeness of the Kozen-Park axiomatization for such flat
coalgebraic fixpoint logics.
\end{abstract}

 \begin{CCSXML}
<ccs2012>
<concept>
<concept_id>10003752.10003790.10003793</concept_id>
<concept_desc>Theory of computation~Modal and temporal logics</concept_desc>
<concept_significance>500</concept_significance>
</concept>
<concept>
<concept_id>10003752.10003790.10003792</concept_id>
<concept_desc>Theory of computation~Proof theory</concept_desc>
<concept_significance>300</concept_significance>
</concept>
<concept>
<concept_id>10003752.10003790.10003797</concept_id>
<concept_desc>Theory of computation~Description logics</concept_desc>
<concept_significance>300</concept_significance>
</concept>
</ccs2012>
\end{CCSXML}

\ccsdesc[500]{Theory of computation~Modal and temporal logics}
\ccsdesc[300]{Theory of computation~Proof theory}
\ccsdesc[300]{Theory of computation~Description logics}

\keywords{Completeness, Kozen/Park axioms, branching-time temporal
  logics, coalgebraic logic, alternating-time temporal logic, graded
  $\mu$-calculus, algebraic semantics}

\maketitle             

\section{Introduction}

\noindent Many of the most well-known logics in program verification,
concurrency, and other areas of computer science and artificial
intelligence can be cast as modal fixpoint logics, that is, embedded
into some variant of the $\mu$-calculus. Typical examples are
PDL~\cite{Pratt76} where, say, the formula $\langle a^ *\rangle p$
(`$p$ can be reached by finite iteration of $a$') can be expressed as
the least fixpoint
\begin{equation*}
\mu X.\, p\lor\langle a\rangle
X;
\end{equation*}
CTL~\cite{EmersonClarke82}, whose formula $AFp$ (`$p$ eventually holds
on all paths') is just the fixpoint
\begin{equation*}
  \mu X.\,p\lor\Box X;
\end{equation*}
and the common knowledge operator $C$ of epistemic
logic~\cite{Lewis69}, where $Cp$ (`it is common knowledge that $p$')
can be expressed as the fixpoint
\begin{equation*}
\nu X.\,\Land_{i=1}^nK_i(p\land X)
\end{equation*}
with $n$ the number of agents and $K_i$ read as `agent $i$ knows
that'. A common feature of these examples is that they trade off
expressivity for simplicity of expression in comparison to the full
$\mu$-calculus.

One of the reasons why the full $\mu$-calculus is hard for both end
users and logicians 
is that it requires keeping track of bound variables. Indeed we note that
the simpler logics listed above (in the case of PDL, the
$*$-nesting-free fragment) live in the single-variable fragment of the
$\mu$-calculus (a subfragment of the alternation-free
fragment~\cite{EmersonLei86}), which is precisely what enables one to
abandon variables altogether in favour of variable-free fixpoint
operators such as $AF$ or $C$. We refer to logics that embed
into a single-variable $\mu$-calculus as \emph{flat fixpoint
  logics}~\cite{SantocanaleVenema10}.

Here, we study flat fixpoint logics in the more general setting of
\emph{coalgebraic logic}. Coalgebra has emerged as the right level of
generality for a unified treatment of a wide range of modalities with
seemingly disparate semantics beyond the realm of pure relational
structures. Examples include monotone modalities~\cite{Chellas80},
probabilistic modalities~\cite{LarsenSkou91}, graded
modalities~\cite{Fine72,DAgostinoVisser02},
coalitional/alternating-time modalities~\cite{AlurEA02,Pauly02}, and
various non-monotonic
conditionals~\cite{FriedmanHalpern94,OlivettiEA07}. The semantics of
coalgebraic logic is parametrized over the choice of an endofunctor on
the category of sets, whose coalgebras play the role of
frames. Besides standard Kripke frames, the notion of coalgebra
encompasses, e.g., Markov chains, weighted automata, multigraphs,
neighbourhood frames, selection function frames~\cite{Chellas80}, and
concurrent game structures~\cite{AlurEA02}. Generic completeness
results in coalgebraic logic are parametrized over sets of rules or
axioms that satisfy a local form of completeness called \emph{one-step
  completeness}. That is, they require completeness of a restricted
logic without fixpoints and nesting of modalities that is interpreted
over mere elements of the functor rather than full-blown frames or
models (e.g.\ in the relational base case, over a single subset of the
base set thought of as a local view on a set of successors), a
condition that is typically quite easy to establish. In fact, suitable
one-step complete axiomatizations for many examples can already be
found in the literature on coalgebraic
logic~\cite{Pattinson03,Schroder07,SchroderPattinson09a,KupkePattinson10}.

In our \emph{flat coalgebraic fixpoint logics} one thus can express
statements such as `the coalition $C$ of agents can maintain $p$
forever', `the present state is the root of a binary tree all whose
leaves satisfy $p$', or `$p$ is commonly believed with reasonable
certainty'. In particular, we cover flat fragments of the graded
$\mu$-calculus~\cite{KupfermanEA02} and the alternating-time
$\mu$-calculus (AMC)~\cite{AlurEA02}; one such flat fragment is
alternating-time temporal logic (ATL). 

Our main result on flat coalgebraic fixpoint logics is completeness of
the natural axiomatization that makes the fixpoint definitions
explicit, generalizing the well-known Kozen-Park axiomatization. The
axiomatization is parametric both w.r.t.\ the coalgebraic branching
type and the choice of flat fragment, under mild restrictions on the
form of fixpoint operators. This result generalizes results by
Santocanale and Venema~\citeyear{SantocanaleVenema10} to the level of
coalgebraic logic, and relies on the notion of
\emph{$\CO$-adjointness}~\cite{Santocanale08} to prove that fixpoints
in the Lindenbaum algebra are \emph{constructive}, i.e.\ approximable
in $\omega$ steps. The crucial ingredient here are \emph{one-step
  cutfree complete} rule
sets~\cite{SchroderPattinson09a,PattinsonSchroder10}. These enable
generalizations of both the key \emph{rigidity lemma} and the
$\CO$-adjointness theorem of~\cite{SantocanaleVenema10}, the latter to
the effect that \emph{all uniform-depth modal operators are
  $\CO$-adjoint}.

Our completeness result follows a long tradition of non-trivial
completeness proofs for fixpoint logics, e.g.\
PDL~\cite{KozenParikh81,Segerberg82}, CTL~\cite{EmersonHalpern85},
LTL~\cite{GabbayEA80,LichtensteinPnueli00}, the aconjunctive
$\mu$-calculus~\cite{Kozen83}, and the full
$\mu$-calculus~\cite{Walukiewicz00}. Note that all these results are
independent, as completeness is not in general inherited by sublogics,
and in fact employ quite different methods.  Instantiating our generic
results to concrete logics yields new results in nearly all cases that
go beyond the classical relational $\mu$-calculus, noting that neither
Kupferman et al.~\citeyear{KupfermanEA02} nor Cirstea et
al.~\citeyear{CirsteaEA11} cover axiomatizations. In particular, we
obtain for the first time a completeness result for graded fixpoint
logics, i.e.\ fragments of the graded $\mu$-calculus, and we
generalize the completeness of ATL~\cite{GorankoVanDrimmelen06} to
arbitrary flat fragments of the AMC.

\paragraph*{Further Related Work}

The present paper is an extended and revised version of a previous
conference paper~\cite{SchroderVenema10}. The technical treatment
differs from the conference version in that we weaken the assumptions
of the generic completeness theorem to require only a one-step
complete rule set instead of a one-step cutfree complete one;
moreover, we opt for a propositional basis with unrestricted negation,
converting to negation normal form only for purposes of the model
construction in Section~\ref{sec:models}. Flat coalgebraic fixpoint
logics are fragments of coalgebraic $\mu$-calculi, and as such known
to be decidable in \EXP under reasonable
assumptions~\cite{CirsteaEA11}. A tableau-based global caching
algorithm for flat fixpoint logics has recently been developed by
Hausmann and one of the authors
(Schröder)~\citeyear{HausmannSchroder15}, so we omit discussion of the
(less practical) tableau algorithm given in the conference version.

\paragraph*{Organization} We recall the requisite background in
coalgebraic logic and introduce the syntax of semantics of flat
coalgebraic fixpoint logics in Section~\ref{sec:prelim}. We proceed to
discuss the generic axiomatization in Section~\ref{sec:ax}. In
Section~\ref{sec:constructive}, we prove the central $\CO$-adjointness
theorem, and then present the ensuing model construction in
Section~\ref{sec:models}. Section~\ref{sec:conclusions} concludes.

\section{Flat Coalgebraic Fixpoint Logics}
\label{sec:prelim}
\noindent We briefly recall the generic framework of coalgebraic modal
logic~\cite{Pattinson04,Schroder07}, and define its extension with
flat fixpoint operators, a fragment of the coalgebraic
$\mu$-calculus~\cite{CirsteaEA11}. We present the calculus in a form
that includes negation, and therefore need to pay attention to
positive and negative occurrences of variables and subformulas; to
avoid excessive repetition, we fix these notions for all notions of
formula that include negation and possibly a notion of (propositional)
variable:
\begin{definition}
  In any logic with negation $\neg$, an occurrence of a subformula in
  a formula is \emph{positive} if it is in the scope of an even number
  of negations $\neg$, and otherwise \emph{negative}. In logics
  featuring a notion of (propositional) \emph{variable}, a formula
  $\phi$ is \emph{positive} (\emph{negative}) in a variable $x$ if all
  occurrences of $x$ in $\phi$ are positive (negative).
\end{definition}
\noindent For the rest of the paper, we fix a countably infinite set
$V$ of \emph{variables}, with a single distinguished \emph{recursion
  variable} that is always called $x$; all other variables are
\emph{parameter variables}, typically called $p$ or
$p_1,p_2,\dots$. The first parameter of the syntax of a \emph{flat
  coalgebraic fixpoint logic} is a \emph{(modal) similarity type}
$\Lambda$, i.e.\ an at most countable set of modal operators with
associated finite arities. The set of \emph{modal fixpoint schemes}
$\gamma,\delta$ is given by the grammar
\begin{equation*}
  \gamma,\delta::=\bot\mid v\mid\neg\gamma\mid \gamma\land\delta\mid \hearts(\gamma_1,\dots,\gamma_n)
\end{equation*}
where $\hearts\in\Lambda$ is $n$-ary and $v\in V$; we require
additionally that fixpoint schemes are positive in all variables (this
is essential in case of the recursion variable $x$ to ensure existence
of fixpoints; for parameter variables, it is a mere technical
convenience, as negative occurrences of a parameter variable can be
replaced with positive occurrences of a fresh variable, with negation
moved into the parameter formula). Further Boolean operations $\top$,
$\lor$, $\modimpl$, $\modiff$ are defined as usual. Moreover, we
abbreviate
\begin{equation*}
  \dual\hearts(\phi_1,\dots,\phi_n)=\neg\hearts(\neg\phi_1,\dots,\neg\phi_n)
\end{equation*}
for $\hearts\in\Lambda$, and refer to $\dual\hearts$ as the
\emph{dual} of $\hearts$. We intend variables as place holders for
arguments and parameters of formulas defining fixpoint operators; as
such, they serve only technical purposes and will not form part of the
actual fixpoint language defined below (other than as part of modal
fixpoint schemes indexing flat fixpoint operators). In particular,
propositional variables should not be confused with propositional
\emph{atoms}, which can appear also in actual formulas. Propositional
atoms are incorporated into the modal similarity type $\Lambda$ as
nullary operators when needed; this approach not only simplifies the
technical presentation but it also enhances generality in that it
allows covering logics that do not have propositional atoms, such as
Hennessy-Milner logic.

The second syntactic parameter is a set $\Gamma$ of modal fixpoint
schemes $\gamma$ determining the choice of fixpoint operators. We
require that all $\gamma\in\Gamma$ are \emph{guarded}, i.e.\ that all
occurrences of the recursion variable $x$ are under the scope of at
least one modal operator; this is not an essential restriction as
every $\mu$-calculus formula is provably equivalent to a guarded
formula~\cite{Walukiewicz00}. We denote substituted formulas
$\gamma[\phi_1/p_1;\dots;\phi_n/p_n;\psi/x]$ as
$\gamma(\phi_1,\dots,\phi_n,\psi)$. The set
$\FLang_\sharp(\Lambda,\Gamma)$ or just $\FLang_\sharp$ of
\emph{(fixpoint) formulas} $\phi,\psi$ is then defined by the grammar
\begin{equation*}
  \phi,\psi::=\bot\mid \neg\phi\mid\phi\land\psi\mid
  \hearts(\phi_1,\dots,\phi_n)\mid
  \sharp_\gamma(\phi_1,\dots,\phi_n)
\end{equation*}
where $\hearts\in\Lambda$ is $n$-ary and $\gamma\in\Gamma$. The
operator $\sharp_\gamma$ takes  least fixpoints
\begin{equation*}
  \sharp_\gamma(\phi_1,\dots,\phi_n)=\mu x.\gamma(\phi_1,\dots,\phi_n,x).
\end{equation*}
The name \emph{flat} for the fixpoint operators $\sharp_\gamma$
relates to the fact that modal fixpoint schemes $\gamma$ do not
contain fixpoint operators. Note however that nesting of flat fixpoint
operators is unrestricted, i.e.\ the $\phi_i$ can be arbitrary
fixpoint formulas in $\sharp_\gamma(\phi_1,\dots,\phi_n)$. We
introduce greatest fixpoint operators as duals of least fixpoint
operators: for a modal fixpoint scheme $\gamma$ we denote by
$\dual\gamma$ its \emph{dual}, i.e.~the modal fixpoint scheme
$\neg\gamma\sigma$ where $\sigma(v)=\neg v$ for all variables $v$. We
then define the greatest fixpoint operator $\flat_{\dual\gamma}$ by
\begin{equation*}
  \flat_{\dual\gamma}(\phi_1,\dots,\phi_n)= \neg\sharp_{\gamma}(\neg\phi_1,\dots,\neg\phi_n)
\end{equation*}
so that
$\flat_{\dual\gamma}(\phi_1,\dots,\phi_n)=\nu
x.\,\dual\gamma(\phi_1,\dots,\phi_n,x)$.

A standard example of a flat fixpoint logic is CTL, whose operators
$AU,EG$ can be coded as
\begin{equation*}
  A[\phi\, \until\,\psi]=\sharp_{(p_2\lor (p_1\land\Box x))}(\phi,\psi)\quad\text{and}\quad EG\,\phi =\flat_{p\land\Diamond x}\phi.
\end{equation*}
Note here that $p\land\Diamond x$ is equivalent to the dual
$\neg(\neg p\lor\Box\neg x)$ of $p\lor\Box x$.

Syntactically,~$\sharp_\gamma$ is regarded as an atomic operator; in
particular, occurrences of variables in $\gamma$ do not count as
occurrences in formulas $\sharp_\gamma\phi$. For the sake of
readability, we restrict the further technical development to unary
modalities~$\hearts\in\Lambda$ and unary fixpoint operators, i.e.\ we
assume that every $\gamma\in\Gamma$ has only one parameter variable,
\emph{denoted by $p$ throughout}; the extension to higher arities is a
mere notational issue, and in fact we continue to use higher-arity
modalities and fixpoint operators in the examples. Note that we have
not included variables in the definition of fixpoint formulas. A
\emph{(fixpoint) formula with variables} is an expression of the form
$\gamma\sigma$, where $\gamma$ is a modal fixpoint scheme and $\sigma$
is a substitution of some of the variables in $\gamma$ with fixpoint
formulas (i.e.\ variables never appear under fixpoint operators).  In
the following, the term \emph{formula} will refer to fixpoint formulas
without variables unless variables are explicitly mentioned. We
sometimes indicate the occurrence of a variable $v$ in a formula
$\psi$ by writing $\psi(v)$, and then write $\psi(\rho)$ for the
formula obtained by substituting a formula $\rho$ for $v$. For a modal
fixpoint scheme $\gamma(p,x)$, we denote the function taking a formula
$\psi$ to $\gamma(\phi,\psi)$ by $\gamma(\phi)$, and by
$\gamma(\phi)^k$ its $k$-fold iteration.

The logic is further parametrized \emph{semantically} over the
underlying class of systems and the interpretation of the modal
operators. The former is determined by the choice of a functor
\begin{equation*}
  T:\Set\to\Set,
\end{equation*}
i.e.\ an operation $T$ that maps sets $X$ to sets $TX$ and functions
$f:X\to Y$ to functions $Tf:TX\to TY$, preserving identities and
composition.  The role of models is then played by
\emph{$T$-coalgebras}, i.e. pairs $(X, \xi)$ where $X$ is a set of
\emph{states} and
\begin{equation*}
\xi: X \to T X
\end{equation*}
is the \emph{structure map}. Thinking of $TX$ informally as a
parametrized datatype over $X$, we regard $\xi$ as associating with
each state $x$ a structured collection $\xi(x)$ of successor states
and observations. E.g.\ for $TX = \Pow{X}\times\Pow U$, with $U$ a
fixed set of propositional atoms and $\Pow$ denoting the covariant
powerset functor (with $\Pow f(A)=f[A]$ for $f:X\to Y$ and
$A\in\Pow(X)$), we obtain that $T$-coalgebras are Kripke models, as
they associate with each state a set of successor states and a set of
valid propositional atoms. Our main interest here is in examples
beyond Kripke semantics, see Example~\ref{expl:logics}.

Given $T$, the interpretation of the modalities is determined by
associating with each $\hearts\in\Lambda$ a predicate lifting
$\Sem{\hearts}$ for $T$. Here, a \emph{predicate lifting} (for $T$) is
a family of maps $\lambda_X:\Pow{X} \to\Pow{TX}$, where $X$ ranges
over all sets, satisfying the \emph{naturality} condition
\begin{equation*}
  \lambda_X(f^{-1}[A])=(Tf)^{-1}[\lambda_Y(A)]
\end{equation*}
for all $f:X\to Y$, $A\in\Pow{Y}$.  In other words, a predicate
lifting is a natural transformation
\begin{equation*}
  \lambda:\contrapow\to\contrapow\circ T^\Op
\end{equation*}
where $\contrapow:\Set^\Op\to\Set$ denotes the contravariant powerset
functor, given by $QX$ being the powerset of $X$ and $Qf(A)=f^{-1}[A]$
for $f:X\to Y$ and $A\subseteq Y$. The idea is that a predicate
lifting $\lambda_X$ converts a predicate on the set $X$ of states into
a predicate on the set $TX$ of structured collections over $X$. The
basic example, for $TX=\Pow(X)\times\Pow(U)$ as above, is
\begin{equation*}
  \Sem{\Box}_X(A)=\{(B,P)\mid B\subseteq A\},
\end{equation*}
which induces precisely the usual semantics of the box when composed
with taking preimages under the structure map, as in the clause for
the semantics of modalities given below. Given a predicate lifting
$\Sem{\hearts}$, we also have a predicate lifting $\Sem{\dual\hearts}$
for the dual modality $\dual\hearts$, given by 
\begin{equation*}
  \Sem{\dual\hearts}_X(A)=X-\Sem{\hearts}_X(X-A).
\end{equation*}
E.g.\ for $\Box$ as above and $\Diamond=\dualBox$, we have
$\Sem{\Diamond}_X(A)=\{(B,P)\mid B\cap A\neq\emptyset\}$.

As we work with fixpoints, we insist that \emph{all modal operators
  are monotone}, i.e.\ $\Sem{\hearts}_X:\Pow(X)\to\Pow(TX)$ is
monotone w.r.t.\ set inclusion for each $\hearts\in\Lambda$.  The
semantics of a formula $\phi$ with recursion variable $x$ (no other
variables will ever be evaluated in unsubstituted form) is a subset
$\Sem{\phi}_{(X,\xi)}(B)\subseteq X$, depending on a $T$-coalgebra
$(X,\xi)$ and a set $B\subseteq X$ serving as the interpretation
of~$x$.  The semantics of formulas $\phi$ without variables (in
particular of $\sharp$-formulas) will not depend on~$B$ and hence
will be denoted just by $\Sem{\phi}_{(X,\xi)}$; we write
$x\models_{(X,\xi)}\phi$ for $x\in\Sem{\phi}_{(X,\xi)}$. The set
$\Sem{\phi}_{(X,\xi)}(B)$ is defined by recursion over $\phi$:
\begin{align*}
  \Sem{x}_{(X,\xi)}(B)& =B\\
  \Sem{\neg\phi}_{(X,\xi)}(B) &= X\setminus\Sem{\phi}_{(X,\xi)}(B)\\
  \Sem{\phi\land\psi}_{(X,\xi)}(B) &= \Sem{\phi}_{(X,\xi)}(B)\cap \Sem{\psi}_{(X,\xi)}(B)\\
  \Sem{\hearts\phi}_{(X,\xi)}(B) & 
    =  \xi^{-1}\Sem{\hearts}_X(\Sem{\phi}_{(X,\xi)}(B))\\
  \Sem{\sharp_\gamma\phi}_{(X,\xi)}& =\bigcap\{B\subseteq X\mid
     \Sem{\gamma(\phi)}_{(X,\xi)}(B)\subseteq B\}\\
\end{align*}
The clause for $\sharp_\gamma\phi$ just says that
$\Sem{\sharp_\gamma\phi}_{(X,\xi)}$ is the least fixpoint of the map
$\Sem{\gamma(\phi)}_{(X,\xi)}:\Pow(X)\to\Pow(X)$, which is monotone
because all modalities are monotone and modal fixpoint schemes are
positive in the recursion variable. \emph{We fix the data $T$,
  $\Lambda$, $\Gamma$ etc.\ throughout.}

We recall that given $T$-coalgebras $(X,\xi)$ and $(Y,\zeta)$, a
\emph{$T$-coalgebra morphism} $f:(X,\xi)\to(Y,\zeta)$ is a map
$f:X\to Y$ making the diagram
\begin{equation*}
  \xymatrix{X\ar[d]_{\xi}\ar[r]^{f} & Y\ar[d]^{\zeta}\\
  TX \ar[r]_{Tf} & TY}
\end{equation*}
commute. Flat fixpoint formulas are invariant under coalgebra morphisms:
\begin{lemma}\label{lem:mor-preserve}
  Let $f:(X,\xi)\to(Y,\zeta)$ be a $T$-coalgebra morphism, let
  $B\subseteq Y$, and let $\phi$ be a flat fixpoint formula. Then
  \begin{equation}
    \Sem{\phi}_{(X,\xi)}(f^{-1}[B]) = f^{-1}[\Sem{\phi}_{(Y,\zeta)}(B)]. \label{eq:preserve}
  \end{equation}
\end{lemma}
(The lemma holds more generally for the full coalgebraic
$\mu$-calculus, with essentially the same proof; we refrain from
stating it in more generality here to avoid introducing additional
notation.)
\begin{proof}
  Induction over $\phi$; the Boolean cases are trivial, and the modal
  cases are by naturality of predicate liftings
  (cf.~\cite{Pattinson04}). We are left with the fixpoint case; we
  work with $\mu$-calculus notation, i.e.\ our remaining case is of
  the form $\mu x.\,\phi$, a closed formula because there is only one
  recursion variable. That is, we are to show that
  \begin{equation}
    \label{eq:preserve-fp}
    \Sem{\mu x.\,\phi}_{(X,\xi)} = f^{-1}[\Sem{\mu x.\,\phi}_{(Y,\zeta)}].
  \end{equation}
  It is well-known that we can approximate least fixpoints of monotone
  functions from below using ordinal-indexed chains. Specifically,
  $\Sem{\mu x.\,\phi}_{(X,\xi)}$ is the union of the sets
  $\Sem{\phi}_{(X,\xi)}^\alpha(\emptyset)$ indexed over ordinals
  $\alpha$, defined by
  $\Sem{\phi}_{(X,\xi)}^{0}(\emptyset)=\emptyset$, by
  $\Sem{\phi}_{(X,\xi)}^{\alpha+1}(\emptyset)=\Sem{\phi}_{(X,\xi)}(\Sem{\phi}_{(X,\xi)}^{\alpha}(\emptyset))$
  in the successor step, and by
  $\Sem{\phi}_{(X,\xi)}^{\alpha}(\emptyset)=\bigcup_{\beta<\alpha}\Sem{\phi}_{(X,\xi)}^{\beta}(\emptyset)$
  in the limit step; analogously, $\Sem{\mu x.\,\phi}_{(Y,\zeta)}$ is
  approximated from below by sets
  $\Sem{\phi}_{(Y,\zeta)}^\alpha(\emptyset)$. Since taking preimages
  under $f$ commutes with $\Sem{\phi}$ by the inductive hypothesis,
  and generally commutes with (infinite) unions and preserves
  $\emptyset$, an easy transfinite induction shows that
  \begin{equation*}
    \Sem{\phi}_{(X,\xi)}^\alpha(\emptyset)=f^{-1}[\Sem{\phi}_{(Y,\zeta)}^\alpha(\emptyset)]
  \end{equation*}
  for all $\alpha$. The inductive claim follows by forming the union
  over all $\alpha$ on both sides, again using commutation of
  preimages with unions. \rightqed
\end{proof}

 \begin{example}\label{expl:logics}
   We discuss select examples covered by the coalgebraic approach,
   starting with a more detailed exposition of the basic example of
   Kripke semantics and then moving on beyond. More examples are
   found, e.g.,
   in~\cite{SchroderPattinson09a,PattinsonSchroder10}. For the sake of
   readability, we elide the treatment of propositional atoms in all
   examples except Kripke semantics, as the technicalities are the
   same in all cases. 
  \begin{longenum}
  \item\label{item:K} \emph{Kripke semantics:} In mild generalization
    of the basic example discussed above, fixpoint extensions of the
    modal logics $K_m$ have modal operators $\Box_i$ for
    $i=1,\dots,m$, interpreted over the functor $T$ given on sets by
    $TX=(\Pow X)^m\times\Pow U$ using the predicate liftings
    $\lsem \Box_i \rsem_X(A) = \lbrace (B_1,\dots,B_m,P) \in
    (\Pow X)^m\times\Pow U \mid B_i \subseteq A \rbrace$.
    Coalgebras for $T$ are in 1-1 correspondence with $m$-relation
    Kripke models, and $\Sem{\Box_i}$ captures the usual semantics of
    the box operators. Atomic propositions are modelled as nullary
    modalities $a\in\Lambda$, interpreted by nullary predicate
    liftings
    \begin{equation*}
      \Sem{a}_X=\{(B_1,\dots,B_m,P) \in
      (\Pow X)^m\times\Pow U \mid a\in P\}. 
    \end{equation*}
    CTL, $*$-nesting-free PDL, and the logic of common knowledge all
    are flat fixpoint logics in this setting. The CTL operators have
    been exemplified above; the operator $[a^*]$ of PDL is the
    greatest fixpoint operator
    \begin{equation*}
      [a^*]=\flat_{p\land[a]x};
    \end{equation*}
    and the common knowledge operator $C$ is the greatest fixpoint
    operator $\flat_\gamma$ for
    \begin{equation*}
     \gamma=p\land\Land_{i=1}^m \Box_i x.
    \end{equation*}
    Although our main result does not support completely arbitrary
    fixpoint operators, it does (like
    already~\cite{SantocanaleVenema10}) cover operators that go beyond
    CTL. E.g.\ the operator `on every path, every even state satisfies
    \dots', which is not expressible in CTL~\cite{Emerson90,Wolper83},
    is the greatest fixpoint operator
    \begin{equation*}
      \flat_{p\land\Box\Box x}.
    \end{equation*}
    
    Strictly speaking one should note that so far, we have not covered
    CTL in the standard sense, according to which models are assumed
    to be \emph{serial}, i.e.\ every state is required to have at
    least one successor. We model this requirement coalgebraically by
    replacing $T$ as above with the functor
    $T'X=\Pow^* X\times\Pow U$ where $\Pow^*$ is the \emph{non-empty
      powerset functor} $\Pow^*$, i.e.~$\Pow^* X$ is the set of
    non-empty subsets of $X$. (Since CTL standardly uses only one
    next-step modality $AX$, we omit the exponent $m$.) We distinguish
    the arising fixpoint logics by the adjective \emph{serial}, and
    refer to absence of the seriality requirement by the adjective
    \emph{non-serial}.

    Further restricting the semantics to require that every state has
    \emph{exactly} one successor in fact produces the usual
    discrete-linear-order semantics of LTL (where $\Box$ is typically
    denoted $\ocircle$). Coalgebraically, this is reflected in using
    the functor $T''X=X\times\Pow U$; we refer to $T''$-coalgebras as
    \emph{deterministic} Kripke models. (For purposes of model
    checking, LTL is more generally interpreted over labelled
    transition systems, but the more restrictive discrete-linear-order
    semantics is equivalent for purposes of satisfiability.)
  \item\label{item:graded} \emph{Graded fixpoint logics} are sublogics
    of the \emph{graded $\mu$-calculus}~\cite{KupfermanEA02}.  They
    have modal operators $\gldiamond{k}$ `in more than $k$
    successors', with duals
    $\glbox{k}=\dual{\mystrut{4pt}\gldiamond{k}}$ `in all but $k$
    successors'. We interpret them over the functor $\Bag$ that takes
    a set $X$ to the set
    \begin{equation*}
      \Bag X=X\to\Nat\cup\{\infty\}
    \end{equation*}
    of multisets over $X$ (with possibly infinite multiplicities) by
    \begin{equation*}\textstyle
      \Sem{\gldiamond{k}}_X(A)=\{\mu\in\Bag X\mid \mu(A) >
      k\}
    \end{equation*}
    where we use $\mu\in\Bag X$ as an $\Nat\cup\{\infty\}$-valued
    measure, i.e.\ write $\mu(A)=\sum_{x\in A}\mu(x)$. This captures
    the semantics of graded modalities over
    \emph{multigraphs}~\cite{DAgostinoVisser02}, which is equivalent
    to the more customary Kripke semantics~\cite{Fine72} w.r.t.\
    satisfiability of fixpoint formulas
    (Lemma~\ref{lem:graded-sem}). In description logic, graded
    operators are called \emph{qualified number
      restrictions}~\cite{BaaderEA03}.  The example mentioned
    in~\cite{KupfermanEA02}, a graded fixpoint formula expressing that
    the current state is the root of a finite binary tree all whose
    leaves satisfy $p$, can be expressed by the $\sharp$-operator for
    \begin{equation*}
      p\lor\gldiamond{1}x.
    \end{equation*}
    Similarly, the $\sharp$-operator for
    \begin{equation*}
      p\lor\glbox{k}x
    \end{equation*}
    expresses that $p$ holds somewhere on every infinite $k+1$-ary
    tree starting at the current state. To add an example where the
    recursion variable $x$ appears under more than one modality, the
    $\sharp$-operator for
    \begin{equation*}
      p\lor\gldiamond{1}\gldiamond{1}x
    \end{equation*}
    expresses that the current state is the root of a finite binary
    tree all whose leaves are at even distance from the root and
    satisfy $p$.
  \item \emph{Probabilistic fixpoint logics}, i.e.\ fixpoint
    extensions of probabilistic modal
    logic~\cite{LarsenSkou91,FaginHalpern94,HeifetzMongin01}, have modal operators
    $L_r$ `in the next step, it holds with probability at least $r$
    that', for $r\in[0,1]\cap\Rat$. They are interpreted over the
    functor $\PDist$ that maps a set $X$ to the set of discrete
    probability distributions on $X$ by putting
    \begin{equation*}
      \Sem{L_r}_X(A)=\{P\in\PDist X\mid PA\geq r\}.
    \end{equation*}
    Coalgebras for $\PDist$ are Markov chains. Flat probabilistic
    fixpoint logics in this sense are fragments of the
    \emph{probabilistic $\mu$-calculus} in the sense introduced by
    Cîrstea et al.~\citeyear{CirsteaEA11} (to be distinguished from
    the $[0,1]$-valued logic of the same
    name~\cite{MorganMcIver97,HuthKwiatkowska97}) as an instance of
    the coalgebraic $\mu$-calculus. The probabilistic $\mu$-calculus
    was subsequently reinvented by Liu et al.~\citeyear{LiuEA15} (Liu
    et al.~prove that model checking the probabilistic $\mu$-calculus
    is in \textsc{UP}$\cap$\textsc{coUP} and satisfiability is in
    2\ExpTime while Cîrstea et al.~already show that satisfiability is
    \ExpTime-complete). More generally, one can admit linear
    inequalities between probabilities, as, e.g., in work by Fagin and
    Halpern~\citeyear{FaginHalpern94}, as long as one pays attention
    to monotonicity; this allows, e.g., for statements of the form
    `more probably than not'.

    In a view of probabilistic logic as a logic of reactive systems,
    we can use the $\flat$-operator
    \begin{equation*}
      AG_r=\flat_{p\land L_r x}
    \end{equation*}
    to express formulas like $AG_r\,\neg\mathsf{fail}$, stating that
    the system will, at any point during its run time, fail with
    probability at most $1-r$; a sensible specification for systems
    that may sometimes fail but should not fail excessively often (as
    announced, we silently include propositional atoms such as
    $\mathsf{fail}$ in the syntax). 
    
    Alternatively, we may interpret the operators $L_p$
    epistemically. We extend the logic to multiple agents in the same
    way as for $K$ in Item~\ref{item:K}, obtaining a logic with
    probabilistic operators $L^i_r$ read `agent $i$ believes with
    confidence $r$ that'. We then have an uncertain variant $C_r$ of
    the common knowledge operator, namely the $\flat$-operator for
    \begin{equation*}
      \textstyle\Land_{i=1}^m L_r^i(p\land x).
    \end{equation*}
    Thus, $C_r\phi$ is read `everyone believes with confidence $r$
    that $\phi$ holds and that everyone believes with confidence $r$
    that $\phi$ holds etc.', in short `$\phi$ is commonly believed
    with confidence $r$'. 

    A variant is the operator $C_{r,q}$ which, when applied to a
    formula $\phi$, separates belief in~$\phi$ from beliefs about
    other agents: $C_{r,q}$ is the $\flat$-operator for
    \begin{equation*}
      \textstyle\Land_{i=1}^n (L_r^ip\land L_q^i x),
    \end{equation*}
    and thus $C_{r,q}\phi$ states that everyone believes $\phi$ with
    confidence $r$, and believes with confidence $q$ that all agents
    believe the same, etc. (Note that $C_{r,r}$ is not the same
    as~$C_r$!)

  \item \emph{Conditional fixpoint logics} have a single binary modal
    operator $\CondArrow$, written in infix notation. The intended
    reading of $a\CondArrow b$ is `if $a$, then \emph{normally}
    $b$'. Conditional logics come with a wide variety of
    axiomatizations and semantics~(see,
    e.g.,~\cite{PattinsonSchroder10} for an overview). E.g., the
    minimal conditional logic $\CK$ is interpreted over the functor
    $\CF$ that maps a set $X$ to the set $\Pow X\to\Pow X$ (more
    precisely $\contrapow X\to\Pow X$, where $\contrapow$ is
    contravariant powerset), whose coalgebras are selection function
    models~\cite{Chellas80}, by putting
    \begin{equation*}
      \Sem{\CondArrow}_X(A,B)= \{f\in\CF X\mid f(A)\subseteq B\}.
    \end{equation*}
    (Thus, $\CondArrow$ is monotone in the second but not antimonotone
    in the first argument, which is precisely the point of inventing
    it. Technically, this is fine as long as we form fixpoints only
    over the second argument, i.e.\ let the recursion variable $x$
    appear in $\gamma\in\Gamma$ only to the right of $\CondArrow$ as
    in the example below.)

    We can combine coalgebraic logics freely using results
    of~\cite{SchroderPattinson07}. E.g., combining conditional logic
    with multi-agent $K$ in its description logic incarnation
    $\mathcal{ALC}$~\cite{BaaderEA03}, we can define an abstract
    concept of animal taxa with two-gender descendancy as
    $\flat_\gamma(\mathsf{FirstOfItsKind},\mathsf{Male},\mathsf{Female})$
    with $\gamma$ given as
    \begin{equation*}
      (\neg p_1)\Rightarrow((\exists\,\mathsf{hasParent}.\,p_2\sqcap x)
      \sqcap(\exists\,\mathsf{hasParent}.\,p_3\sqcap x)).
    \end{equation*}
    A taxon is a fixpoint of
    $\gamma(\mathsf{FirstOfItsKind},\mathsf{Male},$ $\mathsf{Female})$
    if all individuals that do not belong to some assumed first
    ancestor generation of the species \emph{normally} have two
    parents of the same species, and the greatest fixpoint subsumes
    all animals belonging to such taxa. The use of the conditional
    $\Rightarrow$ instead of standard material implication $\modimpl$
    takes into account that these days, a given exceptional sheep
    might, e.g., be a clone and thus not have parents in the strict
    sense.
  \item\label{item:amc}\emph{The alternating-time $\mu$-calculus
      (AMC)}~\cite{AlurEA02} has modal operators
    $\ATLdiamond{C}\bigcirc$ read `coalition $C$ of agents can enforce
    \dots\ in one step', where a \emph{coalition} $C$ is a subset
    $C\subseteq N$ of a fixed set $N=\{1,\dots,n\}$ of agents; we
    shall also write $[C]$ in place of $\ATLdiamond{C}\bigcirc$ as in
    coalition logic~\cite{Pauly02}. The semantics of the coalitional
    modalities is defined over \emph{concurrent game structures} (or
    \emph{game frames}), and can be captured
    coalgebraically~\cite{SchroderPattinson09a}: We define a functor
    $\Gm$ by
    \begin{equation*}
      \Gm  X = \lbrace (f, (k_i)_{i \in N}) \mid\textstyle f:
      \big(\prod_{i \in N}\big) [k_i] \to X \rbrace
    \end{equation*}
    where $k_i\in\Nat$ and $[k_i]=\{1,\dots,k_i\}$, and by
    $\Gm g(f, (k_i)_{i \in N})=(g\circ f, (k_i)_{i \in N})$ for
    $g:X\to Y$. This captures a form of concurrent game where each
    agent $i\in N$ chooses a move $j_i\in[k_i]$ and the joint choice
    determines an \emph{outcome} $f(j_1,\dots,j_n)\in X$.  (The
    semantics given by Pauly~\citeyear{Pauly02} differs slightly in
    that the agents can have unrestricted sets of available moves
    rather than only finite ones.)  Coalgebras $(X,\xi)$ for $\Gm$ are
    \emph{concurrent game structures}~\cite{AlurEA02}; they associate
    to each state $x\in X$ a concurrent game $\xi(x)\in\Gm X$ whose
    outcomes are states, and thus allow for plays with multiple
    successive moves.  The semantics of the modalities $[C]$ is given
    by the liftings
    \begin{equation*}
      \lsem [C] \rsem_X(A) = \lbrace (f, (k_i)_{i \in N}) \in \Gm  X \mid
      \exists (j_i\in[k_i])_{i\in C}.\,\forall (j_i\in[k_i])_{i \in N
        \setminus C}.\, f((j_i)_{i \in N}) \in A \rbrace.
    \end{equation*}
    That is, a state $x$ in a concurrent game structure satisfies
    $[C]\phi$ if the agents in $C$ have a joint choice of moves such
    that regardless of the choice of moves by the other agents, the
    outcome satisfies $\phi$.

    One of the flat fragments of the AMC is Alternating-Time Temporal
    Logic (ATL)~\cite{AlurEA02}. E.g., the ATL-operator
    $\ATLdiamond{C}p_1\until p_2$, read `coalition $C$ can eventually
    force $p_2$ and meanwhile maintain $p_1$', is the
    $\sharp$-operator for
    \begin{equation*}
      p_2\lor(p_1\land [C] x).
    \end{equation*}
    As already in the case of CTL, flat fixpoints in the AMC go
    considerably beyond ATL; e.g.\ the $\flat$-operator for
    $p\land[\emptyset][\emptyset] x$ (`$p$ holds in all even states
    along any path') is not even in the more expressive logic
    ATL$^*$~\cite{AlurEA02,Dam94}. A similar flat operator, the
    $\flat$-operator for $[C] (p\land [D](q\land x))$, expresses that
    coalitions $C$ and $D$ can forever play ping-pong between $p$ and
    $q$.
  \item \emph{Monotone fixpoint logics} have a modal operator $\Box$,
    interpreted over the \emph{monotone neighbourhood functor} defined
    by
    \begin{equation*}
      \CM X=\{\mathfrak{A}\in\Pow(\Pow X)\mid \mathfrak{A}\text{
        upwards closed}\}
    \end{equation*}
    by means of the predicate lifting
    \begin{equation*}
      \Sem{\Box}_X(A)=\{\mathfrak{A}\in\CM X\mid A\in\mathfrak{A}\}.
    \end{equation*}
    (The functor $\CM$ acts on maps $f:X\to Y$ by
    $\CM f(\mathfrak{A})=\{B\in\Pow Y\mid
    f^{-1}[B]\in\mathfrak{A}\}$,
    and hence is a subfunctor of the double contravariant powerset
    functor.) Often, the axioms $\Box\top$ and $\Diamond\top$ are
    imposed where $\Diamond=\dualBox$ denotes the dual of $\Box$.
    This amounts to using the subfunctor $\CM_s$ of $\CM$ given by
    \begin{equation*}
      \CM_s X=\{\mathfrak{A}\in\CM X\mid \emptyset\notin\mathfrak{A}\owns X\},
    \end{equation*}
    the \emph{serial monotone neighbourhood functor}, whose coalgebras
    are \emph{serial monotone neighbourhood frames}. In particular,
    these form the semantic setting of concurrent PDL~\cite{Peleg87}
    and Parikh's game logic~\cite{Parikh85}, where additionally
    operators are indexed over atomic programs or games, respectively;
    this is modelled coalgebraically in the same way as multi-modal
    $K$ (Item~\ref{item:K}). The $*$-nesting-free fragments of
    concurrent PDL and game logic are flat fixpoint logics. E.g., game
    logic has operators $\langle\gamma\rangle$, indexed over composite
    games $\gamma$ and read `Angel has a strategy to enforce \dots in
    game $\gamma$'. Games are formed from atomic games using the usual
    constructs for regular expressions as in PDL, and additionally the
    dualizing operator $(-)^d$ which swaps the roles of the players in
    a game. E.g.\ one has a demonic iteration operator $(-)^\times$
    defined by $\gamma^\times=((\gamma^d)^*)^d$. The formula
    $\langle\gamma^\times\rangle\phi$ thus says that Angel has a
    strategy to enforce $\phi$ in the game where $\gamma$ is played
    repeatedly, with Demon choosing the number of rounds. If $\gamma$
    is star-free (i.e.\ contains neither $(-)^*$ nor $(-)^\times$), we
    can phrase $\langle\gamma^\times\rangle$ as the $\flat$-operator
    for $p\land\langle\gamma\rangle x$.
  \end{longenum}
\end{example}
It remains to show the mentioned equivalence of the Kripke semantics
and the multigraph semantics of the graded $\mu$-calculus
(Example~\ref{expl:logics}.\ref{item:graded}), generalizing the
equivalence for the fixpoint-free case~\cite{Schroder07}.
\begin{lemma}
  \label{lem:graded-sem}
  A formula in the flat graded $\mu$-calculus is satisfiable over
  (finite) Kripke frames iff it is satisfiable over (finite)
  multigraphs.
\end{lemma}
\begin{proof}
  For the sake of simplicity, we continue to elide propositional
  atoms. Since every Kripke frame can be regarded as a multigraph,
  `only if' is clear. To show `if', let $\phi$ be a flat graded
  fixpoint formula, and let $x_0$ be a state in a $\Bag$-coalgebra
  $(X,\xi)$ such that $x_0\models_{(X,\xi)}\phi$. Let $k_0$ be maximal
  such that $\gldiamond{k_0}$ occurs in $\phi$.  Observe that $\phi$
  remains satisfied if we replace $\xi$ with $\xi'$ where
  $\xi'(x)(y)=\min(\xi(x)(y),k_0+1)$ (formally, this is proved by
  induction on $\phi$), so we can assume that all $\xi(x)(y)$ are
  finite (in fact, at most $k_0+1$). Now construct a Kripke model
  $(\bar X,R)$ by making sufficiently many copies of states, as in
  \cite[Remark~6]{Schroder07}: take as states in $\bar X$ all pairs
  $(y,j)\in X\times\Nat$ such that $\xi(x)(y)>j$ for some~$x$, and in
  this case put $(x,i)R(y,j)$ for all $i$ such that $(x,i)\in\bar X$.
  Note that $\bar X$ is finite if $X$ is finite. Like for any Kripke
  frame, we can equivalently regard $(\bar X,R)$ as a multigraph
  $(\bar X,\bar\xi)$ where $\bar\xi(x,i)(y,j)=1$ if $(x,i)R(y,j)$, and
  $\bar\xi(x,i)(y,j)=0$ otherwise. Let $\pi:\bar X\to X$ denote the
  projection that maps $(x,i)$ to $x$. By
  Lemma~\ref{lem:mor-preserve}, it suffices to show that
  \begin{equation*}
    \pi:(\bar X,\bar\xi)\to(X,\xi)
  \end{equation*}
  is a $\Bag$-coalgebra morphism, i.e.\ that
  \begin{equation*}
    \Bag\pi(\bar\xi(x,i))=\xi(x).
  \end{equation*}
  Indeed the multiplicity of $y\in X$ in the multiset on the left hand
  side is the cardinality of the set
  $\{(y,j)\in\bar X\mid (x,i)R(y,j)\}$, which equals $\xi(x)(y)$ by
  construction of $R$. \rightqed
\end{proof}
\section{The Generic Axiomatization}\label{sec:ax}

\noindent The generic semantic and syntactic framework of the previous
section comes with a generic, parametrized \emph{deduction system},
whose completeness will be our main result. We begin with the fixed
part of the deduction system. We include full \emph{propositional
  reasoning}, i.e.\ introduction of substituted propositional
tautologies and modus ponens. Fixpoints are governed by the obvious
\emph{Kozen-Park axiomatization}: we have the \emph{unfolding} axiom
\begin{equation*}
  \gamma(\phi,\sharp_\gamma\phi)\to\sharp_\gamma\phi
\end{equation*}
and the \emph{fixpoint induction} rule
\begin{equation*}
\infrule{
\gamma(\phi,\chi)\modimpl\chi
}{
\sharp_\gamma\phi\modimpl\chi
}
,
\end{equation*}
for all formulas $\phi,\chi$; together, these axioms capture the fact
that $\sharp_\gamma\phi$ is the least prefixpoint of
$\gamma(\phi)$. 
\begin{lemma}\label{lem:fp-mon}
  The \emph{monotonicity} and \emph{congruence} rules
  \begin{equation*}
    \infrule{\phi\to\psi}{\sharp_\gamma\phi\to\sharp_\gamma\psi}
    \qquad
    \infrule{\phi\modiff\psi}{\sharp_\gamma\phi\modiff\sharp_\gamma\psi}
  \end{equation*}
  are derivable.
\end{lemma}
\begin{proof}
  The congruence rule is derivable from the monotonicity rule. To
  derive the latter, assume $\phi\to\psi$. Since $\gamma$ is positive
  in the parameter variable, we can then derive
  $\gamma(\phi,\sharp_\gamma\psi)\to\gamma(\psi,\sharp_\gamma\psi)$. By
  unfolding, we derive
  $\gamma(\phi,\sharp_\gamma\psi)\to\sharp_\gamma\psi$, and then
  $\sharp_\gamma\phi\to\sharp_\gamma\psi$ by fixpoint
  induction. \rightqed
\end{proof}
The variable part of the proof system is the axiomatization of the
modal operators, which turns out to be completely orthogonal to the
fixpoint axiomatization. In fact, we can just reuse complete rule sets
for the purely modal part of the
logic~\cite{Schroder07,SchroderPattinson09a}.  First some notation.

\begin{definition}\label{def:propstuff}
  We denote the set of propositional formulas over a set~$Z$ by
  $\Prop(Z)$, 
  and the set $\{\hearts a\mid \hearts\in\Lambda,a\in Z\}$ by
  $\Lambda(Z)$.  A \emph{literal} over~$Z$ is either an element of~$Z$
  or the negation of such an element, i.e.\ has the form $\epsilon z$
  where $z\in Z$ and $\epsilon\in\{\cdot,\neg\}$ is either nothing or
  negation.  A (disjunctive) \emph{clause} is a finite (possibly
  empty) disjunction of literals; a \emph{conjunctive clause} is a
  finite conjunction of literals. A clause (disjunctive or
  conjunctive) is \emph{contracted} if it contains every literal at
  most once. For $\phi,\psi\in\Prop(Z)$, we say that~$\phi$
  \emph{propositionally entails}~$\psi$, and write $\phi\PLentails\psi$,
  if $\phi\modimpl\psi$ is a propositional tautology.  Similarly,
  $\Phi\subseteq\Prop(Z)$ propositionally entails $\psi$
  ($\Phi\PLentails\psi$) if there exist $\phi_1,\dots,\phi_n\in\Phi$
  such that $\phi_1\land\dots\land\phi_n\PLentails\psi$. We write $2$
  for the set $\{\bot,\top\}$ of truth values.
  For $\phi\in\Prop(Z)$, we denote the evaluation of $\phi$ in the
  Boolean algebra $\Pow X$ under a valuation $\tau: Z \to \Pow X$ by
  $\Sem{\phi}_{X,\tau}$, and write $X, \tau \models \phi$ if
  $\Sem{\phi}_{X,\tau}=X$. For $\psi\in\Prop(\Lambda(\Prop(Z)))$, the
  interpretation $\lsem \psi \rsem_{TX, \tau}$ of $\psi$ in the
  Boolean algebra $\Pow(TX)$ under $\tau$ is the inductive extension
  of the assignment
  $\lsem \hearts (z) \rsem_{TX, \tau} = \lsem \hearts \rsem_X\tau(z)$.
  We write $TX, \tau \models \psi$ if
  $\lsem \psi \rsem_{TX, \tau} = TX$. A propositional formula over
  $\Lambda(V)$ is \emph{injective} if it mentions every variable at
  most once.  
\end{definition}
\noindent We can now give the formal definition of the modal rule
format. 
\begin{definition}\label{def:os-rules}
  A \emph{one-step rule} $R=\phi/\chi$ consists of a \emph{premise}
  $\phi\in\Prop(V)$ and a \emph{conclusion} $\chi$ which is an
  injective (disjunctive) clause over $\Lambda(V)$ (recall that $V$ is
  the set of variables), where every variable in $\phi$ appears also
  in~$\chi$.  We say that $R$ is \emph{monotone} (a notion similar to
  one introduced by Cirstea et al.~\citeyear{CirsteaEA11} for rules
  phrased without negation) if whenever $\chi$ is positive (negative)
  in a variable $a\in V$ then $\phi$ is positive (negative) in $a$.
  The rule $R$ is \emph{one-step sound} if whenever
  $X,\tau\models\phi$ for a valuation $\tau:V\to\Pow X$, then
  $TX,\tau\models\chi$. A set $\Rules$ of one-step rules is
  \emph{one-step complete} if, whenever $TX,\tau\models\psi$ for a
  set~$X$, an injective clause $\psi$ over $\Lambda(V)$, and a
  $\powerset X$-valuation~$\tau$, then~$\psi$ is 
  \emph{provable over $X,\tau$}, i.e.\ propositionally entailed by
  clauses $\chi\sigma$ where $\phi/\chi\in\Rules$ and~$\sigma$ is a
  $\Prop(V)$-substitution such that $X,\tau\models\phi\sigma$.
  Moreover, $\Rules$ is \emph{one-step cutfree complete} if, whenever
  $TX,\tau\models\psi$ for $X,\tau,\psi$ as above, then~$\psi$ is
  \emph{cutfree provable over $X,\tau$}, i.e.\ $\chi\PLentails\psi$
  for some $\phi/\chi\in\Rules$ such that $X,\tau\models\phi$.
\end{definition}
\begin{remark}
  In the terminology of~\cite{SchroderPattinson09a}, one-step cutfree
  complete rule sets correspond to one-step complete rule sets which
  are closed under contraction, resolution, and injective renamings of
  the propositional variables. Notice in particular that $\chi$ as in
  the definition of cutfree provability is injective, being the
  conclusion of a one-step rule, and hence contracted.
\end{remark}




%
\noindent As the last
parameter of the framework, we
\begin{quote}
  \emph{fix from now on a one-step complete set $\Rules$ of one-step
    sound one-step rules, and denote the arising logic by
    $\Lang_\sharp$}. 
\end{quote}
Rules $\phi/\psi\in\Rules$ are applied in substituted form, i.e\ for
every substitution $\sigma$, we may conclude $\psi\sigma$ from
$\phi\sigma$. In summary, the proof system consists of propositional
reasoning, the unfolding axiom, the fixpoint induction rule, and the
rules in $\Rules$. It is easy to see that this system is sound. As
usual, we write $\entails\phi$ if a formula $\phi$ is provable. We say
that $\phi$ is \emph{consistent} if $\neg\phi$ is not provable. One
should note that the system without the fixpoint rules (i.e.\
comprising only propositional reasoning and a one-step complete set of
modal rules) is complete for the fixpoint-free
language~\cite{Schroder07}.

We conclude the technical part of the section with some facts on the
relationship between the two notions of one-step completeness.
\begin{definition}
  A one-step rule $\phi/\chi$ is \emph{$\Rules$-derivable} if it is
  propositionally entailed by conclusions $\psi\sigma$ of rules
  $\rho/\psi\in\Rules$ where $\sigma$ is a $\Prop(V)$-substitution and
  $\phi\PLentails\rho\sigma$.
\end{definition}
\begin{lemma}\label{lem:oss-derive}
  All one-step sound one-step rules are $\Rules$-derivable.
\end{lemma}
\begin{proof}
  Let $\phi/\chi$ be one-step sound. Let $V_0\subseteq V$ be the set
  of propositional variables that occur in $\chi$, and put
  \begin{equation*}
    X=\{\kappa:V_0\to 2\mid \kappa\models\phi\}\qquad
    \tau(a)=\{\kappa\in X\mid\kappa(a)=\top\}.
  \end{equation*}
  Then $X,\kappa\models\phi$, so that $TX,\kappa\models\chi$ by
  one-step soundness. By one-step completeness of $\Rules$, $\chi$ is
  propositionally entailed by clauses $\psi\sigma$ where
  $\rho/\psi\in\Rules$ and $\sigma$ is a $\Prop(V)$-substitution such
  that $X,\tau\models\rho\sigma$. It suffices to show that
  $\phi\PLentails\rho\sigma$, which however is clear by construction
  of $X,\tau$. \rightqed
\end{proof}
\begin{lemma}\label{lem:mon-rules-complete}
  The set of $\Rules$-derivable monotone one-step rules is one-step cutfree
  complete.
\end{lemma}
\noindent In other words, \emph{a clause over $\Lambda(V)$ is provable
  over $X,\tau$ iff it is cutfree provable over $X,\tau$ using an
  $\Rules$-derivable monotone one-step rule}. We thus say that a conjunctive
clause $\rho$ over $\Lambda(V)$ is \emph{one-step $\tau$-consistent}
for $\tau:V\to\Pow X$ if $\neg\rho$ is not provable over $X,\tau$,
equivalently not cutfree provable over $X,\tau$ using an $\Rules$-derivable
monotone one-step rule.
\begin{proof}
  By Lemma~\ref{lem:oss-derive}, this follows once we show that the
  set of one-step sound monotone one-step rules is one-step cutfree
  complete. This is proved by Cirstea et
  al.~\citeyear[Proposition~4.7]{CirsteaEA11} for a formally even more
  restrictive class of rules (also called
  monotone\footnote{Specifically, a rule is monotone in the sense of
    Cirstea at al.\ if its premise is a positive formula and its
    conclusion is a disjunction of atoms of the form $\hearts a$ or
    $\dual\hearts a$; in fact, monotone rules in our sense can be
    transformed into this format by taking negation normal forms and
    substituting away negated variables.}).  \rightqed
\end{proof}
\begin{example}\label{expl:rules}
  One-step complete rule systems have been exhibited for all logics of
  Example~\ref{expl:logics} and many
  more~\cite{Pattinson03,CirsteaPattinson07,SchroderPattinson09a,PattinsonSchroder10,KupkePattinson10}. In
  some cases~\cite{CirsteaPattinson07,Pattinson03}, axiomatizations
  have been phrased in terms of one-step \emph{axioms}, i.e.\ formulas
  in $\phi\in\Prop(\Lambda(\Prop(V)))$ that can be introduced in
  substituted form $\phi\sigma$ with $\sigma$ a substitution of
  propositional variables by formulas; it has been shown that this
  format is interconvertible with one-step rules~\cite{Schroder07}. We
  recall some examples in more detail, converting to one-step rules
  where necessary.
  \begin{longenum}
  \item \emph{Kripke semantics:} The standard axiomatization of the
    modal logic $K_1$, with $\Box_1$ written as just $\Box$, can be
    phrased in terms of one-step rules for necessitation,
    monotonicity, and normality as
    \begin{equation*}
      \infrule{a}{\Box a}\quad\infrule{a\to b}{\Box a\to \Box b}\quad
      \infrule{a\land b\to c}{\Box a\land\Box b\to\Box c}
    \end{equation*}
    where here and in the following, we write clauses
    $\neg a_1\lor\dots\lor\neg a_n\lor b_1\lor\dots b_m$ as
    implications $a_1\land\dots\land a_n\to b_1\lor\dots\lor b_m$.
    Given this axiomatization, a one-step rule $\phi/\psi$ is
    $\Rules$-derivable iff $\psi$ contains disjuncts
    $\neg\Box a_1,\dots,\neg\Box a_n,\Box b$ (with $n\ge 0$) such that
    $\phi\PLentails a_1\land\dots\land a_n\to b$. 

    A more restricted semantics over serial Kripke models as for
    serial CTL (Example~\ref{expl:logics}.\ref{item:K}) is reflected
    in the additional rule
    \begin{equation*}
      \infrule{\neg a}{\neg\Box a}.
    \end{equation*}
    Restricting additionally to deterministic models as used for LTL
    (and switching from $\Box$ to $\ocircle$ as customary) logically
    corresponds to adding the rule
    \begin{equation*}
      \infrule{a\lor b}{\ocircle a\lor\ocircle b}
    \end{equation*}
    (which, given the other rules, is mutually interderivable with the
    better-known axiom $\neg\ocircle a\to\ocircle\neg a$).
  \item\label{item:rules-graded} \emph{Graded fixpoint logics:} Rephrasing a known
    complete axiomatization of graded modal logic~\cite{DeCaro88}, we obtain
    the rules
      \begin{gather*}
    \lrule{RG1}{a\modimpl b}{\gldiamond{n+1}a\modimpl\gldiamond{n}b}
    \qquad
    \lrule{A1}
	  {c\modimpl a\lor b}
	  {\gldiamond{n_1+n_2}c\modimpl\gldiamond{n_1}a\lor\gldiamond{n_2}b}\\
    \lrule{A2}
	  {\begin{array}{c}
	a\lor b\modimpl c\\
	a\land b\modimpl d
    \end{array}}
	  {\gldiamond{n_1}a\land\gldiamond{n_2}b\modimpl
	    \gldiamond{n_1+n_2+1}c\lor\gldiamond{0}d}
	  \qquad
    \lrule{RN}{\lnot a}{\lnot\gldiamond{0}a}.
  \end{gather*}
  These rules are clearly one-step sound. They have previously been
  shown to be one-step complete~\cite{SchroderPattinson09a} by
  reference to the previous completeness proof for graded modal logic;
  we give a simple stand-alone proof in
  Lemma~\ref{lem:graded-completeness}.
\item \emph{Probabilistic fixpoint logics:} We can reuse the one-step
  complete rule set for probabilistic modal
  logic~\cite{HeifetzMongin01,CirsteaPattinson07}. For the extended
  language with linear inequalities on probabilities, one has the
  one-step cutfree complete rule set given by Kupke and
  Pattinson~\citeyear{KupkePattinson10}, noting that one-step cutfree
  complete rule sets can be restricted to any subset of the modal
  operators~\cite{SchroderPattinson09a}, in particular to monotone
  linear inequalities.
\item \emph{Conditional fixpoint logics:} one-step complete rule sets
  are known for various flavours of conditional
  logic~\cite{PattinsonSchroder10,SchroderEA10}.
\item \emph{Alternating-time $\mu$-calculus:} The following one-step
  complete set of rules is implicit in~\cite{Pauly02} (see
  also~\cite{SchroderPattinson09a}):
  \begin{equation*}
    \infrule{\lnot a}
    {\lnot\plbox{C} a}
    \quad
    \infrule{a}{\plbox{C}a}
    \quad
    \infrule{a \lor b}{\plbox{0}a \lor \plbox{N} b}
    \quad
    \infrule{a\land b \modimpl c}
    {[C]a \land [D]b \to [C \cup D]c}
  \end{equation*}
  where $C$ and $D$ are disjoint in the last rule. In words: no
  coalition can enforce the logically impossible; every coalition can
  enforce logical tautologies; either $a$ is unavoidable or $\neg a$
  can be enforced by all agents in collaboration; and disjoint
  coalitions can combine their abilities.
\item \emph{Monotone fixpoint logics:} When we interpret $\Box$ over
  the monotone neighbourhood functor, we have only the rule
  \begin{equation*}
    \infrule{a\to b}{\Box a \to\Box b}.
  \end{equation*}
  Seriality is captured by the additional rules
  \begin{equation*}
    \infrule{a}{\Box a}\qquad
    \infrule{\neg a}{\neg\Box a}.
  \end{equation*}    
  \end{longenum}
\end{example}
\begin{remark}\label{rem:finite-ax}
  We point out that in all examples with finite modal similarity
  type~$\Lambda$, the rule sets given above are finite, so that our
  completeness result will establish finite axiomatizability; this
  holds in particular for alternating-time logics. When $\Lambda$ is
  infinite, we cannot reasonably expect a finite axiomatization. The
  rules for graded modalities are \emph{locally finite} in the sense
  that every modality is mentioned only in finitely many axioms. The
  rules for probabilistic
  logic~\cite{HeifetzMongin01,CirsteaPattinson07} do not have this
  property, and it remains open whether a locally finite
  axiomatization is possible in this case.
\end{remark}

\noindent We conclude the section with the announced stand-alone proof
of one-step completeness of the rules for graded modal logic:
\begin{lemma}\label{lem:graded-completeness}
  The rules $(RG1)$, $(A1)$, $(A2)$, and $(RN)$
  (Example~\ref{expl:rules}.\ref{item:rules-graded}) are one-step
  complete for graded modal logic.
\end{lemma}
\begin{proof}
  Let $\tau:V\to\Pow X$. We extend the notion of one-step
  $\tau$-consistency to infinite conjunctive clauses, i.e.\ infinite
  sets of literals, over $\Lambda(V)$ in the evident way: a set $\Phi$
  of literals over $\Lambda(V)$ is \emph{one-step $\tau$-consistent}
  if for all $\rho_1,\dots,\rho_n\in\Phi$,
  $\rho_1\land\dots\land\rho_n$ is one-step $\tau$-consistent. As
  usual, one-step completeness dualizes to showing that a given
  one-step $\tau$-consistent clause $\phi$ is \emph{one-step
    $\tau$-satisfiable}, i.e.\
  $\Sem{\phi}_{\Bag X,\tau}\neq\emptyset$. We can assume that $X$ is
  finite~(see~\cite[Proposition~23]{Schroder07},
  \cite[Lemma~30]{MyersEA09}), and then that $\tau:V\to\Pow X$ is
  surjective. By a standard argument, there exists a maximal infinite
  conjunctive clause $\Phi$ over $\Lambda(V)$ that is one-step
  $\tau$-consistent and contains $\phi$. (Note that the argument is
  entirely generic up to this point.) We construct a multiset
  $\mu\in\Bag X$ that satisfies $\Phi$ over $\tau$, i.e.\
  $\mu\in\bigcap_{\phi\in\Phi}\Sem{\phi}_{TX,\tau}$. Specifically, we
  define $\mu$ as an $\Nat\cup\{\infty\}$-valued measure on $X$ by
  \begin{equation*}
    \mu(A)=\max\{k+1\mid\gldiamond{k}a\in\Phi,\tau(a)=A\}
  \end{equation*}
  for $A\subseteq X$, where by convention $\max\emptyset=0$. To see
  that $\mu$ is really a measure, note first that $\mu(\emptyset)=0$
  by $(RN)$, applied to $a$ such that $\tau(a)=\emptyset$. Moreover,
  $\mu$ is additive, i.e.\ $\mu(A\cup B)=\mu(A)+\mu(B)$ for disjoint
  $A,B$. (Since $X$ is finite, there is no issue about $\mu$ being
  $\sigma$-additive.) Here, $\ge$ is by $(A2)$ and $(RN)$, applied to
  $a,b,c,d$ such that $\tau(a)=A$, $\tau(b)=B$, $\tau(d)=\emptyset$,
  and $\tau(c)=A\cup B$. Similarly, $\le$ is by $(A1)$ applied to
  $a,b,c$ such that $\tau(a)=A$, $\tau(b)=B$, and $\tau(c)=A\cup B$. It
  remains to show that $\mu\in\Sem{\gldiamond{k}a}_{TX,\tau}$ iff
  $\gldiamond{k}a\in\Phi$. `If' is immediate by the definition of
  $\mu$, and `only if' is by $(RG1)$. \rightqed
\end{proof}

\section{Constructive Fixpoints}\label{sec:constructive}

\noindent Our next aim is to prove that the Lindenbaum algebra of
$\Lang_\sharp$ is \emph{constructive}, i.e.\ its fixpoints can be
iteratively approximated in $\omega$ steps. In terms of consistency of
formulas, this means that whenever a formula of the form
$\sharp_{\gamma}\phi \land \psi$ is consistent, then already
$\gamma^i(\phi)(\bot) \land \psi$ is consistent for some $i<\omega$;
this fact will play a pivotal role in the model construction in
Section~\ref{sec:models}.  We begin by introducing the requisite
algebraic tools.

\begin{definition}
  A \emph{$\Lambda$-algebra} is a Boolean algebra $A$ with a monotone
  operation $\hearts^{A}: A \to A$ for each $\hearts \in \Lambda$.  In
  a $\Lambda$-algebra $A$, a modal fixpoint scheme $\phi$ with $n$
  variables is interpreted as an operation $\phi^{A}: A^{n} \to A$ in
  the evident way. A \emph{$\sharp$-algebra} is a $\Lambda$-algebra~$A$
  that is endowed with operations $\sharp_{\gamma}^{A}$ for each
  $\gamma\in\Gamma$ such that for each $a \in A$,
  $\sharp_{\gamma}^{A}(a)$ is the least fixpoint of the map
  $\gamma^{A}(a,-): A \to A$ (in particular, these fixpoints
  \emph{exist} in a $\sharp$-algebra).  In a $\sharp$-algebra $A$,
  every fixpoint formula with variables $\phi(v_1,\dots,v_n)$ is
  interpreted as a function $\phi^A:A^n\to A$. We say that $A$
  \emph{validates} a rule $R = \phi/\psi$ if
  $\psi^{A}(a_{1},\ldots,a_{n}) = \top$ whenever
  $\phi^{A}(a_{1},\ldots,a_{n}) = \top$, for $a_1,\dots,a_n\in A$.  An
  \emph{$\Lang_\sharp$-algebra} is a $\sharp$-algebra $A$ that
  validates every rule in our fixed set $\Rules$ of one-step rules.
\end{definition}
In the tradition of algebraic logic, $\Lang_\sharp$-algebras provide
an algebraic encoding of our proof system. More specifically, we will
be interested in the \emph{Lindenbaum algebra} $A(\Lang_\sharp)$ of
our logic.  As usual, this algebra is defined as the quotient of the
set $\FLang_\sharp$ of formulas under the congruence relation $\equiv$
of provable equivalence ($\phi \equiv \psi$ iff
$\entails\phi\modiff\psi$), equipped with the algebra structure that
just interprets every connective as itself. (The congruence property
of $\equiv$ follows from the fact that both the modalities and the
fixpoint operators come with congruence rules, the former by one-step
completeness and the latter by Lemma~\ref{lem:fp-mon}.) An easy
induction shows that every formula $\phi$ is interpreted as the
element $\phi^{A(\Lang_\sharp)} = [\phi]$ of this algebra, where
$[\phi]$ denotes the equivalence class of $\phi$ under $\equiv$; when
there is no danger of confusion, we may write $\phi$ in place of
$[\phi]$.  The Kozen-Park axiomatization ensures that
$A(\Lang_\sharp)$ actually is an $\Lang_\sharp$-algebra:
  \begin{lemma}
    The Lindenbaum algebra is the initial
    $\Lang_\sharp$-algebra.
  \end{lemma}
  \begin{proof}
    The unfolding axiom makes $\sharp_\gamma\phi$ a prefixpoint of
    $\phi^{A(\Lang_\sharp)}(\phi,-)$ in $A(\Lang_\sharp)$, and the
    fixpoint induction rule ensures that $\sharp_\gamma\phi$ is the
    least such, because every element of $A(\Lang_\sharp)$ is the
    denotation of a formula. This shows that $A(\Lang_\sharp)$ is an
    $\Lang_\sharp$-algebra.  Initiality is then straightforward: if
    $B$ is an $\Lang_\sharp$-algebra, then $B$ validates the unfolding
    axiom and the fixpoint induction rule so that we have a
    well-defined map $f:A(\Lang_\sharp)\to B$ given by
    $f([\phi])=\phi^B$.  By construction of the algebra structure on
    $A(\Lang_\sharp)$, $f$ is homomorphic w.r.t.\ all algebraic
    operators (Boolean, modal, and fixpoint). Uniqueness of $f$ is
    shown by induction over formulas in the usual way, using the fact
    that the fixpoint operators are explicitly included in the algebra
    structure. \rightqed
  \end{proof}
In these terms, our target property is phrased as follows.
\begin{definition}
  We say that $\gamma\in\Gamma$ is \emph{constructive} if
  \begin{equation*}
    \sharp_\gamma\phi = \bigvee_{i<\omega}\gamma(\phi)^i(\bot)
  \end{equation*}
  in the Lindenbaum algebra $A(\Lang_\sharp)$, i.e.\ if
  $\entails\sharp_\gamma\phi\to\psi$ whenever
  $\entails\gamma(\phi)^i(\bot)\to\psi$ for all $i<\omega$. If all
  $\gamma\in\Gamma$ are constructive, then \emph{$A(\Lang_\sharp)$ is
    constructive}.
\end{definition}
\noindent We explicitly state the dual formulation of this property:
\begin{lemma}\label{lem:constructive-dual}
  Let $\gamma$ be constructive. If $\sharp_\gamma\phi\land\psi$ is
  consistent, then $\gamma(\phi)^i(\bot)\land\psi$ is consistent for
  some $i<\omega$.
\end{lemma}
\noindent The central tool for proving constructivity, introduced by
Santocanale~\citeyear{Santocanale08} and featuring prominently in
subsequent work by Santocanale and
Venema~\citeyear{SantocanaleVenema10}, is the notion of a finitary
$\CO$-adjoint:
\begin{definition}
  We say that a formula $\phi$ with a variable $x$ is an
  \emph{$\CO$-adjoint} if for all $\psi\in\FLang_\sharp$, there
  exists a finite set $G_{\phi}(\psi)$ of formulas such that
  for all $\rho\in\FLang_\sharp$,
  \begin{equation*}
  \entails\phi(\rho)\to\psi\text{ iff }\entails\rho\to \chi
  \text{ for some $\chi\in G_{\phi}(\psi)$},    
  \end{equation*}
  i.e.\ $\phi(\rho)\le\psi$ in $A(\Lang_\sharp)$ iff $\rho\le \chi$
  for some $\chi\in G_{\phi}(\psi)$. Moreover, $\phi$ is a
  \emph{finitary $\CO$-adjoint} if $G_{\phi}$ can be chosen
  such that for every $\psi$, the closure of $\psi$ under
  $G_{\phi}$,
  i.e.\ the smallest set $\CA$ with $\psi\in \CA$ and $\chi\in\CA$
  $\Rightarrow$ $G_{\phi}(\chi)\subseteq\CA$, is finite. We say that a
  modal fixpoint scheme $\gamma\in\Gamma$ is a \emph{(finitary) $\CO$-adjoint}
  if $\gamma(\phi)$ is a (finitary) $\CO$-adjoint for all
  $\phi\in\FLang_\sharp$.
\end{definition}
\begin{lemma}~\cite{Santocanale08} Every finitary $\CO$-adjoint is
  constructive.
\end{lemma}

\begin{definition}\label{def:toplevel}  Let
  $\phi$ be a formula (possibly with variables) such that $\phi$
  contains some modal operator and every occurrence of a variable
  in~$\phi$ is in the scope of some modal operator. Let $\phi'$ be the
  equivalent formula that arises by unfolding all top-level
  occurrences of $\sharp$ once (where an occurrence is
  \emph{top-level} if it is not in the scope of a modal or fixpoint
  operator). By guardedness of fixpoint operators, $\phi'$ is of the
  form $\phi_0\sigma$, where $\phi_0\in\Prop(\Lambda(V))$ is injective
  and $\sigma$ is a substitution.
  We refer to the equivalence $\phi\equiv\phi_0\sigma$ (or more
  precisely the pair $(\phi_0,\sigma)$) as the \emph{top-level
    decomposition} of $\phi$.
\end{definition}
\noindent The first step in the proof of $\CO$-adjointness for a large
class of operators is a generalization of the rigidity lemma
of~\cite{Santocanale08}:
\begin{lemma}[Rigidity]\label{lem:osc}
  Let $\psi$ be a clause over 
  $\Lambda(\FLang_\sharp)$, with top-level decomposition
  $\psi\equiv\psi_0\sigma_0$. Then $\psi$ is provable iff there exists
  an $\Rules$-derivable monotone
  one-step rule $\phi/\chi$ such that $\phi\sigma_0$ is provable and
  $\chi\PLentails\psi_0$. 
\end{lemma}

\noindent The proof relies on the \emph{one-point extension} of an
algebra (so called because it mimics the addition of a new root point
in a coalgebraic model on the algebraic side), in generalization of a
similar construction by Santocanale and
Venema~\citeyear{SantocanaleVenema10}:

Let $A$ be a countable $\Lang_\sharp$-algebra, let $\Space(A)$ be the
set of ultrafilters of $A$, fix a surjective map $\sigma:V\to A$, and
let an injective conjunctive clause $\rho$ over $\Lambda(V)$ be
one-step $\theta$-consistent for $\theta:V\to\Pow(\Space(A))$ given by
$\theta=j\circ\sigma$ (applicative composition) where
$j:A\to\Pow(\Space(A))$ is the usual canonical map
$j(a)=\{u\in\Space(A)\mid a\in u\}$.  We construct the one-point
extension $A^\rho$, an $\Lang_\sharp$-algebra emulating the addition
of a new point whose successor structure is described by $\rho$, as
follows. To begin, we can find a maximally one-step cutfree
$\theta$-consistent set $\Phi\subseteq\Lambda(V)$ such that
$\Phi\PLentails\rho$.  As we emulate adding a single point, the
carrier of $A^\rho$ is $A \times 2$ where $2$ is the Boolean algebra
$\{\bot,\top\}$; we thus have projection maps $\pi_1:A^\rho\to A$ and
$\pi_2:A^\rho\to 2$. We make $A^\rho$ into a $\Lambda$-algebra by
putting
\begin{equation*}
  \hearts^{A^\rho}(a,d)=(\hearts^A(a),\hearts^\rho(a)),
\end{equation*}
 where $\hearts^\rho:A\to 2$ is defined by
 $\hearts^\rho(a)=\top$ iff $\hearts a\in\Phi\sigma$.
 (Thus, $\hearts^{A^\rho}(a,d)$ is independent of $d$, in agreement
 with the semantic fact that the interpretation of modal operators
 depends only on the successor structure of the current state, not on
 the current state itself.)  We define a valuation
 $\hat\sigma:V\to A^\rho$ by
 \begin{equation*}
   \hat\sigma(v)=
   \begin{cases}
     (\sigma(v),\bot)& \text{if $v$ occurs in a positive literal in $\rho$}\\
     (\sigma(v),\top)& \text{otherwise.}
   \end{cases}
 \end{equation*}
We then have
 \begin{equation}\label{eq:one-point-sat}
   \rho\hat\sigma>\bot\text{ in }A^\rho:
\end{equation}
by definition of $A^\rho$, the second component of $\rho\hat\sigma$ is
$\top$ because $\Phi\sigma\PLentails\rho\sigma$.
\begin{lemma}\label{lem:one-point-alg}
The algebra $A^\rho$ is an $\Lang_\sharp$-algebra.
\end{lemma}
\begin{proof}
  The proof that $A^\rho$ is a $\sharp$-algebra is a simple
  application of Bekic's theorem, as in~\cite{SantocanaleVenema10}:
  since $A^\rho$ has carrier $A\times 2$, the fixpoint definition of
  $(\psi,c)=\sharp_\gamma(\phi,b)$ in $A^\rho$ can be seen as a
  mutually recursive definition of two variables $\psi,c$. By
  guardedness of fixpoints and the interpretation of modalities in
  $A^\rho$, the definition of the first variable $\psi$ does not
  mention the second variable $c$, so we can calculate the solution
  for $\psi$ separately, using the fact that $A$ is a
  $\sharp$-algebra, and then replace $\psi$ with its solution
  $\sharp_\gamma(\psi)$ in the recursive definition of the second
  variable $c$. We end up with $c$ being defined as the least fixpoint
  of a monotone function on $2$, which exists because $2$ is a
  complete lattice. 

  It remains to prove that $A^\rho$ validates the one-step rules in
  $\Rules$.  The first component of $A^\rho$ behaves just like $A$, so
  that we have to verify the rules only on the second component, $2$.
  That is, whenever we have a one-step rule $\chi/\psi\in\Rules$ and a
  valuation $\tau:V\to A^\rho$ such that $\chi\tau=\top$ in $A^\rho$,
  we have to prove that $\pi_2\psi\tau=\top$.  Since $\psi\tau$
  depends only on $\pi_1\tau$, we thus have to prove that whenever we
  have $\tau':V\to A$ such that $\chi\tau'=\top$ in $A$, the
  interpretation of $\psi\tau'$ in $2$ is $\top$, where the
  interpretation of $\psi\tau'$ is determined by means of the
  $\hearts^\rho$ and the Boolean algebra structure of $2$.  Since
  $\sigma$ is surjective, we have $\tau'=\sigma\circ\hat\tau$
  (applicative composition) for some renaming $\hat\tau:V\to V$.
  It now follows that $\Phi\PLentails\psi\hat\tau$: Otherwise, by
  maximality of of $\Phi$, we would have
  $\Phi\PLentails\neg\psi\hat\tau$. Since $\Phi$ is one-step
  $\theta$-consistent, this implies that
  $\Sem{\neg\chi\hat\tau}_{\Space(A),\theta}\neq\emptyset$. But since
  $\theta\circ\hat\tau=j\circ\sigma\circ\hat\tau=j\circ\tau'$,
  $\Sem{\neg\chi\hat\tau}_{\Space(A),\theta}$ is the image of
  $\neg\chi\tau'\in A$ under the Boolean homomorphism
  $j:A\to\Pow(\Space(A))$. This implies $\neg\chi\tau'\neq\bot$, in
  contradiction to $\chi\tau'=\top$. \rightqed
\end{proof}

\noindent
In consequence of the fact that $A(\Lang_\sharp)$ is the
\emph{initial} $\Lang_\sharp$-algebra, we thus have
\begin{lemma}\label{lem:one-point}
  Let $\sigma:V\to A(\Lang_\sharp)$ be surjective, and let $\rho$ be a
  conjunctive clause over $\Lambda(V)$. If $\rho$ is one-step
  $\theta$-consistent for
  $\theta(v)=\{u\in\Space(A(\Lang_\sharp))\mid\sigma(v)\in u\}$, then
  $\rho\sigma$ is consistent, i.e.\ $\rho\sigma>\bot$ in
  $A(\Lang_\sharp)$.
\end{lemma}
\begin{proof}
  Let $f$ be the unique $\Lang_\sharp$-algebra homomorphism
  $A(\Lang_\sharp)\to A(\Lang_\sharp)^\rho$, and take
  $\hat\sigma:V\to A(\Lang_\sharp)^\rho$ as above. Note that the first
  projection $\pi_1:A(\Lang_\sharp)^\rho\to A(\Lang_\sharp)$ is a
  homomorphism of $\Lang_\sharp$-algebras by construction of
  $A(\Lang_\sharp)^\rho$. By the uniqueness part of initiality, it
  follows that $\pi_1\circ f=\id$. For $a\in V$, we therefore have
  $\pi_1(f(\sigma(a))=\sigma(a)=\pi_1(\hat\sigma(a))$; moreover, if
  $a$ occurs in a positive literal in $\rho$, then
  $\pi_2(f(\sigma(a)))\ge\bot=\pi_2(\hat\sigma(a))$, and if $a$ occurs
  in a negative literal in $\rho$ (hence does not occur in a positive
  literal, since $\rho$ is injective), then
  $\pi_2(f(\sigma(a)))\le\top=\pi_2(\hat\sigma(a))$. Since the order
  on $A(\Lang_\sharp)^\rho$ is componentwise, this implies that
  $f\circ\sigma(a)\ge\hat\sigma(a)$ if $a$ occurs in a positive
  literal in $\rho$, and $f\circ\sigma(a)\le\hat\sigma(a)$ if $a$
  occurs in a negative literal in $\rho$. Since $f$ is homomorphic and
  the modalities are monotone, it follows by~\eqref{eq:one-point-sat}
  that $f(\rho\sigma)=\rho(f\circ\sigma)\ge\rho\hat\sigma>\bot$. Since
  $f$ preserves $\bot$, this implies $\rho\sigma>\bot$. \rightqed
\end{proof}
\noindent From Lemma~\ref{lem:one-point}, one easily proves rigidity
(Lemma~\ref{lem:osc}) using the fact that every consistent formula is
contained in some ultrafilter of $A(\Lang_\sharp)$:

\begin{proof}[(Lemma~\ref{lem:osc})]
  `If' is clear. For `only if', we prove the dual statement: Let
  $\psi$ be a conjunctive clause over $\Lambda(\FLang_\sharp)$, with
  top-level decomposition $\psi\equiv\psi_0\sigma_0$, such that
  $\neg\phi\sigma_0$ is consistent for all $\Rules$-derivable monotone
  one-step rules $\phi/\chi$ and all substitutions $\sigma$ such that
  $\chi\PLentails\neg\psi_0$; we show that $\psi$ is consistent.
  W.l.o.g.\ $\sigma_0$ is a surjection $V\to\FLang_\sharp$, which
  we prolong to a surjection $\bar\sigma_0:V\to A(\Lang_\sharp)$.
  By Lemma~\ref{lem:one-point}, it suffices to prove that $\psi_0$ (a
  conjunctive clause over $\Lambda(V)$) is one-step
  $\theta$-consistent for
  $\theta(v)= \{u\in\Space(A(\Lang_\sharp))\mid\bar\sigma_0(v)\in
  u\}$.
  Thus, let $\phi/\chi$ be an $\Rules$-derivable rule such that
  $\chi\PLentails\neg\psi_0$.  We have to show that
  $\Space(A(\Lang_\sharp)),\theta\not\models\phi$.  By assumption and
  because $\chi\PLentails\neg\psi_0$, we have that $\neg\phi\sigma_0$
  is consistent, hence $\neg\phi\bar\sigma_0>\bot$ in
  $A(\Lang_\sharp)$, so that there exists
  $u\in\Space(A(\Lang_\sharp))$ with $\neg\phi\bar\sigma_0\in u$.  Now
  one shows by induction over $\rho\in\Prop(V)$ that
  $u\in\Sem{\rho}_{\Space(A(\Lang_\sharp)),\theta}$ iff
  $\rho\bar\sigma_0\in u$: the cases for Boolean connectives are by
  the ultrafilter property of $u$, and the base case $\rho=a\in V$ is
  by definition of~$\theta$. In particular, we have
  $u\in\Sem{\neg\phi}_{\Space(A(\Lang_\sharp)),\theta}$, showing that
  $\Space(A(\Lang_\sharp)),\theta\not\models\phi$ as required.
  \rightqed
\end{proof}
In a nutshell, rigidity enables us to prove $\CO$-adjointness of all
(monotone) modal operators, and even more generally all modal fixpoint schemes
in which the recursion variable $x$ occurs at uniform depth (such as
$\Box\Diamond x\land\Diamond\Box x$). Formally:

\begin{definition}\label{def:uniform}
  A formula $\phi$ with variables is \emph{uniform of depth $k$} if
  every occurrence of the fixed recursion variable $x$ in $\phi$ is
  in the scope of exactly $k$ modal operators (including the case that
  $x$ does not occur in $\phi$; recall moreover that variables
  never occur under fixpoint operators). Moreover, $\phi$ is
  \emph{uniform} if $\phi$ is uniform of depth $k$ for some $k$;
  the minimal such $k$ is the \emph{depth of uniformity} of $\phi$.
  Formally, uniform formulas are defined
  inductively by following inductive clauses.
  \begin{itemize}
  \item $x$ is uniform of depth $0$;
  \item formulas not containing $x$ are uniform of depth $k$, for any $k$;
  \item any Boolean combination of uniform formulas of depth $k$ is
    uniform of depth $k$; and
  \item if $\psi$ is uniform of depth $k$, then $\hearts\psi$ is
    uniform of depth $k+1$, for $\hearts\in\Lambda$.
  \end{itemize}

\end{definition}

\noindent Finitariness of $\CO$-adjoints will use the standard
Fischer-Ladner closure:
\begin{definition}
  A set $\Sigma$ of formulas is \emph{Fischer-Ladner closed} if
  $\Sigma$ is closed under subformulas and negation (where the
  negation of a negated formula $\neg\phi$ is taken to be~$\phi$), and
  whenever $\sharp_\gamma\phi\in\Sigma$, then
  $\gamma(\phi,\sharp_\gamma\phi)\in\Sigma$.  We denote the
  \emph{Fischer-Ladner closure} of a formula $\phi$, i.e.\ the
  smallest Fischer-Ladner closed set containing $\phi$, by
  $\FL(\phi)$.
\end{definition}
\begin{lemma}~\cite{Kozen83} \label{lem:fl} The set $\FL(\phi)$ is
  finite. 
\end{lemma}

\noindent The further development revolves largely around
derivable rules:
\begin{definition}
  A \emph{rule} $R=\phi/\psi$ consists of a \emph{premise} $\phi$ and
  a \emph{conclusion} $\psi$, both being formulas with variables. The
  rule $R$ is \emph{derivable} if $\psi$ can be derived from the
  assumption $\phi$ using the rules of the system (propositional
  reasoning, unfolding, fixpoint induction, modal rules).
\end{definition}
\begin{lemma}\label{lem:positive-monotone}
  If $\phi$ is positive in $x$ and $\entails\chi\to\rho$, then
  $\entails\phi(\chi)\to\phi(\rho)$.
\end{lemma}
\begin{proof}
  Induction over $\phi$, simultaneously with a corresponding statement
  on formulas that are negative in $x$. The Boolean cases are
  straightforward, and the cases for fixpoint operators are trivial
  because $x$ never appears under fixpoint operators. The case for a
  modal operator $\hearts\in\Lambda$ is discharged by the fact that by
  monotonicity of $\hearts$, the monotonicity rule
  $a\to b/\hearts a\to\hearts b$ is one-step sound and therefore, by
  Lemma~\ref{lem:oss-derive}, $\Rules$-derivable. \rightqed
\end{proof}

\begin{lemma}\label{lem:uniform}
  Let $\psi=\psi(x)$ be a uniform formula, and put
  \begin{equation*}
    G=\{\phi\in\Prop(\FL(\psi))\mid\phi/\psi\text{ derivable},
    \phi\text{ uniform of depth $0$}\}.
  \end{equation*}
  Then given a formula $\rho$, $\psi(\rho)$ is provable iff
  $\phi(\rho)$ is provable for some~$\phi(x)\in G$.
\end{lemma}
\begin{proof}
  `If' holds by definition of admissibility. We prove `only if' by
  induction over the depth of uniformity, with trivial base
  case. Thus, let $\psi$ be uniform of depth $k>0$, and let
  $\psi(\rho)$ be provable. By unfolding (guarded) top-level fixpoints
  (i.e.\ those not in the scope of a modality) and then applying
  propositional reasoning to transform into CNF, we reduce to the case
  that $\psi$ is a clause over $\Lambda(\FL(\psi))$; note here that
  these transformations remain within $\Prop(\FL(\psi))$.  Let
  $\psi\equiv\psi_0\sigma_0$ be the top-level decomposition of $\psi$;
  then the top-level decomposition of $\psi(\rho)$ has the form
  $\psi_0\sigma_0^\rho$ where $\sigma_0^\rho(a)=\sigma_0(a)(\rho)$. By
  rigidity (Lemma~\ref{lem:osc}), we have an $\Rules$-derivable
  monotone one-step rule $\chi/\psi'$ such that $\chi\sigma_0^\rho$ is
  provable and $\psi'\PLentails\psi_0$.  Then the rule
  $\chi\sigma_0/\psi$ is derivable, $\chi\sigma_0\in\Prop(\FL(\psi))$
  is uniform of depth $k-1$, and $(\chi\sigma_0)(\rho)$ is provable
  (being equal to $\chi\sigma_0^\rho$).  By the inductive assumption,
  applied to $\chi\sigma_0$, there exists
  $\phi(x)\in\Prop(\FL(\chi\sigma_0))\subseteq\Prop(\FL(\psi))$ such
  that $\phi/\chi\sigma_0$ is derivable, $\phi$ is uniform of degree
  $0$, and $\phi(\rho)$ is provable. Since $\phi/\chi\sigma_0$ and
  $\chi\sigma_0/\psi$ are derivable, so is $\phi/\psi$, hence
  $\phi\in G$, which proves the claim. \rightqed
\end{proof}
\noindent We recall a trick from propositional logic:
\begin{lemma}\label{lem:shannon}
  Let $\phi$ be a formula with variables containing only top-level
  occurrences (Definition~\ref{def:toplevel}) of the variable
  $y$. Then there is a formula $a$ such that
  \begin{equation*}
    \phi[\top/y]\lor\phi[\bot/y]\PLentails\phi[a/y].
  \end{equation*}
\end{lemma}
\begin{proof}
  By standard Boolean expansion. Specifically, $a=\phi[\top/y]$ does
  the job. This is seen by case distinction over $\phi[\top/y]$:
  First, assume $\phi[\top/y]$; then $\phi[\top/y]$ is equivalent to
  $\top$, so $\phi[\top/y]$ propositionally entails
  $\phi[\phi[\top/y]/y]$ because $y$ has only top-level occurrences in
  $\phi$. Second, assume $\neg\phi[\top/y]$. From
  $\phi[\top/y]\lor\phi[\bot/y]$ we then conclude
  $\phi[\bot/y]$. Since by assumption, $\phi[\top/y]$ is equivalent to
  $\bot$, this, again, propositionally entails
  $\phi[\phi[\top/y]/y]$. \rightqed
\end{proof}
\noindent We are now set to prove the main result of this section:
\begin{theorem}[Finitary $\CO$-adjointness] \label{thm:o-adj}
  If the formula $\psi$ with recursion variable $x$ is positive and
  uniform in $x$, then the operation
  $\psi^{A(\Lang_\sharp)}:A(\Lang_\sharp)\to A(\Lang_\sharp)$ induced
  by $\psi$ is a finitary $\CO$-adjoint.
\end{theorem}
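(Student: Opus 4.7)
The plan is to reduce the adjointness condition $\psi(\rho) \entails \chi$ to a statement of the form $\rho \entails \delta$ via Lemma~\ref{lem:uniform}. Since $\psi$ is monotone, $\neg\psi(\rho)$ is provably equivalent to $\dual\psi(\neg\rho)$, so $\psi(\rho) \entails \chi$ iff $\tilde\psi(\neg\rho)$ is provable, where $\tilde\psi(x) := \dual\psi(x) \lor \chi$. The auxiliary formula $\tilde\psi$ is monotone and uniform of depth $k$ in $x$ (the same depth as $\psi$), since $\chi$ is closed and dualization preserves both monotonicity and uniformity.

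Applying Lemma~\ref{lem:uniform} to $\tilde\psi$ will yield a set $\tilde G \subseteq \Pos(\FL(\tilde\psi))$ of depth-$0$-uniform formulas generating admissible rules $\varphi/\tilde\psi$ (finite up to provable equivalence), such that $\tilde\psi(\rho')$ is provable iff $\varphi(\rho')$ is provable for some $\varphi \in \tilde G$. The key observation will be that, because $\varphi$ is uniform of depth $0$, the only subformula containing $x$ that can occur as a Boolean leaf of $\varphi$ is the variable $x$ itself; any other leaf with $x$ at positive depth would destroy depth-$0$ uniformity. Putting $\varphi$ in disjunctive normal form and separating the disjuncts containing $x$ therefore yields $\varphi \equiv (\beta_\varphi \land x) \lor \gamma_\varphi$, with $\beta_\varphi, \gamma_\varphi \in \Pos(\FL(\dual\psi) \cup \FL(\chi))$ closed.

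A short propositional calculation will then show that $(\beta_\varphi \land \neg\rho) \lor \gamma_\varphi$ is provable iff both $\entails \beta_\varphi \lor \gamma_\varphi$ and $\rho \entails \gamma_\varphi$ hold. I will therefore set
\begin{equation*}
  G_\psi(\chi) := \{\gamma_\varphi \mid \varphi \in \tilde G,\ \entails \beta_\varphi \lor \gamma_\varphi\},
\end{equation*}
which is finite up to provable equivalence. Chaining the equivalences gives $\psi(\rho) \entails \chi$ iff $\rho \entails \delta$ for some $\delta \in G_\psi(\chi)$, establishing $\CO$-adjointness.

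For the finitary property, observe that every $\delta \in G_\psi(\chi)$ lies in the subalgebra of $A(\Lang_\sharp)$ generated under $\{\land,\lor\}$ by the FL-closed set $S := \FL(\dual\psi) \cup \FL(\chi)$. An induction will show that the closure of $\chi$ under $G_\psi$ stays inside this subalgebra: for any $\delta \in \Pos(S)$, the Fischer-Ladner closure $\FL(\delta)$ reduces modulo provable equivalence (via De Morgan) to elements of $\Pos(S)$, since the atoms in $S$ already have their negations in $S$ by FL-closedness of $S$; hence $G_\psi(\delta) \subseteq \Pos(S)$ as well. Since the image of $\Pos(S)$ in $A(\Lang_\sharp)$ is finite (being contained in the finite Boolean subalgebra generated by $S$), the closure is finite. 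The main obstacle will be this final bookkeeping step, where one must carefully reconcile syntactic Fischer-Ladner closure with the semantic Boolean-subalgebra structure; the adjointness construction itself is a fairly clean application of Lemma~\ref{lem:uniform} together with the propositional normalization above.
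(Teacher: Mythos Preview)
Your approach is essentially the paper's: both reduce $\psi(\rho)\entails\chi$ to provability of a uniform formula (your $\tilde\psi(\neg\rho)=\dual\psi(\neg\rho)\lor\chi$ is literally the paper's $\psi'(\rho)=\psi(\rho)\to\chi$), invoke Lemma~\ref{lem:uniform}, and then extract the adjoint set from the resulting depth-$0$ formula. The finitaryness argument is also in the same spirit.

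There is, however, a genuine gap in your extraction step. You claim that any $\varphi\in\tilde G$ decomposes as $(\beta_\varphi\land x)\lor\gamma_\varphi$ with $\beta_\varphi,\gamma_\varphi$ closed, on the grounds that ``the only subformula containing $x$ that can occur as a Boolean leaf of $\varphi$ is the variable $x$ itself.'' But $\varphi\in\Pos(\FL(\tilde\psi))$, and the Fischer--Ladner closure is closed under \emph{negation}; hence $\neg x\in\FL(\tilde\psi)$ and may occur as a Boolean leaf of $\varphi$ without violating depth-$0$ uniformity. In that case the normal form is $(\beta^+\land x)\lor(\beta^-\land\neg x)\lor\gamma$, and provability of $\varphi(\neg\rho)$ becomes the two-sided condition $\neg\varphi(\top)\entails\rho\entails\varphi(\bot)$, which is not of the shape $\rho\entails\delta$ that $\CO$-adjointness requires.

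The paper circumvents this by never assuming monotonicity of the depth-$0$ formula: it takes $G_\psi(\chi)=\{\varphi(\top)\mid \varphi\in G_0,\ \entails\varphi(\top)\lor\varphi(\bot)\}$ and proves both directions via a variable-elimination lemma (Lemma~\ref{lem:elim}), which is precisely the tool that absorbs the possible $\neg x$. Your argument can be repaired in the same way: given $\varphi\in\tilde G$, the rule $\varphi(y)\land(y\to x)/\tilde\psi(x)$ is admissible by monotonicity of $\tilde\psi$, and eliminating $y$ yields the monotone $\varphi^*(x)=(\varphi(\top)\land x)\lor\varphi(\bot)\in\tilde G$ with $\varphi\le\varphi^*$; then your decomposition applies to $\varphi^*$ with $\beta=\varphi(\top)$, $\gamma=\varphi(\bot)$, recovering exactly the paper's construction. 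Without this step, the ``only if'' direction of your adjointness equivalence is unjustified.
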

\begin{proof}
  For readability, we phrase the arguments using formulas in
  $\FLang_\sharp$ rather than elements of $A(\Lang_\sharp)$. Let
  $\phi\in\FLang_\sharp$. We have to construct a set $G_\psi(\phi)$ of
  formulas such that for all $\rho\in\FLang_\sharp$,
  $\entails\psi(\rho)\modimpl\phi$ iff $\entails\rho\modimpl\chi$ for some
  $\chi\in G_\psi(\phi)$; moreover we require $G_\psi(\phi)$ to be
  finite up to provable equivalence.
  
  The formula $\psi':\equiv\psi\modimpl\phi$ is uniform (as $\phi$
  does not contain $x$). Let $G\subseteq\Prop(\FL(\psi'))$ be as in
  Lemma~\ref{lem:uniform}, applied to $\psi'$; notice that
  $\Prop(\FL(\psi'))$ is finite up to provable equivalence. Then put
  \begin{equation*}
    G_\psi(\phi)=\{\chi(\top)\mid\chi(x)\in G,\entails\chi(\top)\lor\chi(\bot)\}.
  \end{equation*}
  Now let $\entails\rho\modimpl\chi(\top)$ for some $\chi(x)\in G$ such
  that $\entails\chi(\top)\lor\chi(\bot)$. To show that
  $\entails\psi(\rho)\modimpl\phi$, it suffices by construction of $G$ and
  positivity of $\psi$ to prove that
  $\entails\chi(a)\land(\rho\modimpl a)$ for some formula $a$: then it
  follows that $\entails(\psi(a)\modimpl\phi)\land(\rho\modimpl a)$,
  and hence by Lemma~\ref{lem:positive-monotone} that
  $\entails\psi(\rho)\modimpl\phi$. Since $\chi\land(\rho\modimpl x)$
  is uniform of depth $0$, existence of such an $a$ follows by
  Lemma~\ref{lem:shannon} once we show that
\begin{equation*}
  \entails(\chi(\top)\land(\rho\modimpl\top))\lor
  (\chi(\bot)\land(\rho\modimpl \bot)),
\end{equation*}
equivalently 
\begin{equation*}
  \entails\chi(\top)\lor(\neg\rho\land\chi(\bot)).
\end{equation*}
By distributing disjunction over conjunction, this last formula is
equivalent to
\begin{equation*}
(\rho\modimpl\chi(\top))\land(\chi(\bot)\lor\chi(\top))
\end{equation*}
and hence provable by assumption.

Conversely, let $\entails\psi(\rho)\modimpl\phi$. By
Lemma~\ref{lem:uniform}, there exists $\chi\in G$ such that
$\chi(\rho)$ is provable. 
Since $\chi$ is uniform of depth $0$, $\chi(\rho)$ is propositionally
equivalent to its Boolean expansion
$(\rho\modimpl\chi(\top))\land(\neg\rho\modimpl\chi(\bot))$ and hence
propositionally entails $\chi(\bot)\lor\chi(\top)$, which is therefore
provable. That is, we have $\chi(\top) \in G_\psi(\phi)$ and
$\entails\rho\modimpl\chi(\top)$, as required.

This proves that $\psi^{A(\Lang_\sharp)}$ is $\CO$-adjoint. From the
above description of $G_\psi$, one sees immediately that
$\psi^{A(\Lang_\sharp)}$ is in fact a finitary $\CO$-adjoint, as
$\Prop(\FL(\psi'))$ is closed under~$G_\psi$ and finite up to provable
equivalence.\rightqed
\end{proof}
\noindent Using uniform formulas as a base, we can now exploit some
known closure properties of finitary
$\CO$-adjoints~\cite{Santocanale08}.
\begin{definition}\label{def:admissible}
  The set of \emph{admissible} modal fixpoint schemes is the closure
  of the set of uniform modal fixpoint schemes under disjunction,
  conjunction with modal fixpoint schemes not containing the recursion
  variable $x$, and substitution for $x$, the latter in the sense that
  if $\gamma(x)$ and $\delta$ are admissible, then $\gamma(\delta)$ is
  admissible.
\end{definition}
\begin{corollary}\label{cor:admissible}
  If $\gamma\in\Gamma$ is admissible, then $\gamma$ is a finitary
  $\CO$-adjoint, and hence constructive.
\end{corollary}
\begin{proof}
  The set of finitary $\CO$-adjoints is closed under joins, meets with
  constants, and composition~\cite{Santocanale08}. \rightqed
\end{proof}
\noindent From now on, \emph{we require that every $\gamma\in\Gamma$
  is admissible}, and hence $A(\Lang_\sharp)$ is constructive; a flat
coalgebraic fixpoint logic satisfying this requirement will be called
\emph{admissible}. All fixpoint operators mentioned in
Example~\ref{expl:logics} are based on admissible fixpoint schemes (in
fact, on uniform ones).

\begin{remark}\label{rem:sv}
  The sufficient criterion for $\CO$-adjointness given by Santocanale
  and Venema~\citeyear{SantocanaleVenema10} is that modal fixpoint
  schemes be \emph{harmless}, which in the single-modality case means
  that modal fixpoint schemes $\gamma,\delta$ are generated by the
  grammar
  \begin{equation*}\textstyle
    \gamma,\delta ::=\top\mid x\mid \gamma\lor\delta\mid
    \chi\land\gamma\mid \Land_{i=1}^k\Diamond\phi_i\mid\Box\phi
  \end{equation*}
  where $\chi$ is a modal formula not mentioning the recursion
  variable $x$ (but possibly mentioning argument variables).  This
  notion is incomparable to admissibility in the sense of
  Definition~\ref{def:admissible}; e.g.\
  $\Diamond x \land\Diamond \Diamond x$ is harmless but not
  admissible, and $\Diamond \Box x \land\Box\Diamond x$ is admissible
  (in fact, uniform) but not harmless. We leave it as an open problem
  to find a sufficient criterion for $\CO$-adjointness that subsumes
  both admissibility and harmlessness. The fixpoint schemes generating
  the CTL operators are both harmless and admissible.
\end{remark}

\section{The Model Construction}\label{sec:models}

\noindent We proceed to describe a model construction that uses sets
of \emph{timed-out formulas} as states; a timed-out formula has some
of the fixpoints that appear in it annotated with time-outs indicating
how often they need to be unfolded. Our time-outs are related to
Kozen's $\mu$-counters~\cite{Kozen83} but, as indicated, are
integrated into formulas rather than maintained independently in a
tableau construction. The use of time-outs is justified by
constructivity of fixpoint operators as proved in the previous
section. 

Since only some of the fixpoint subformulas contained in a state will
be annotated with finite time-outs, only one implication of the truth
lemma will hold (every state satisfies the timed-out formulas it
contains but not conversely -- the model we construct will be finite,
so every fixpoint will be satisfied with some time-out, which the
state may fail to specify). Consequently, in the inductive proof of
the truth lemma, the step for negation would fail. We therefore work
with formulas in negation normal form, defined in detail as follows. A
modal fixpoint scheme is in \emph{negation normal form (NNF)} if it
can be generated by the grammar
\begin{equation*}
  \gamma,\delta::=\bot\mid\top\mid v\mid \gamma\land\delta\mid
  \gamma\lor\delta\mid \hearts\gamma\mid\dual\hearts\gamma,
\end{equation*}
(recall that $\dual\hearts$ abbreviates $\neg\hearts\neg$). We can
clearly transform every modal fixpoint scheme into a provably
equivalent one in NNF (recall that modal fixpoint schemes are positive
in all variables), and therefore assume from now on that $\Gamma$
consists of modal fixpoint schemes in NNF (this does not substantially
affect the syntax as the modal fixpoint schemes just serve as indices
of fixpoint operators). A formula $\phi$ is in \emph{NNF} if it can be
generated by the grammar
\begin{equation*}
  \phi,\psi::=\bot\mid\top\mid \phi\land\psi\mid \phi\lor\psi\mid
  \hearts\phi\mid\dual\hearts\phi\mid
  \sharp_\gamma\phi\mid
  \flat_{\gamma}\phi.
\end{equation*}
The dual $\dual\gamma$ of a modal fixpoint scheme $\gamma$ as defined
in Section~\ref{sec:prelim} is clearly equivalent to the modal
fixpoint scheme obtained from $\gamma$ by swapping $\land$ with
$\lor$, $\top$ with $\bot$, and $\hearts$ with $\dual\hearts$, and we
regard $\dual\gamma$ as being syntactically defined in this way from
now on; e.g.\ if $\Diamond=\dualBox$ then the dual of
$\gamma=p\lor\Box x$ is $\dual\gamma=p\land\Diamond x$. To show that
we can transform every formula into NNF, it suffices as usual to show
that we can implement negation on NNFs; this is by the standard
procedure of pushing negation inside through the other connectives
using provable equivalences. In particular, the equivalence
\begin{equation*}
  \neg\sharp_\gamma\phi\modiff\flat_{\dual\gamma}\neg\phi
\end{equation*}
is trivially provable, as the right hand side is just an abbreviation for
$\neg\sharp_\gamma\neg\neg\phi$. Summing up, we have
\begin{lemma}\label{lem:NNF}
  Every formula is provably equivalent to a formula in NNF.
\end{lemma}
\noindent It remains to adapt the notion of Fischer-Ladner closure; as
no confusion is likely and the changes are rather inessential, we use
the same term:
\begin{definition}
  A set $\Sigma$ of formulas in NNF is \emph{Fischer-Ladner closed} if
  $\Sigma$ is closed under subformulas and fixpoint unfolding, i.e.\
  whenever $\star_\gamma\phi\in\Sigma$ for $\star\in\{\sharp,\flat\}$
  then $\gamma(\phi,\star_\gamma\phi)\in\Sigma$.
\end{definition}
The analogue of Lemma~\ref{lem:fl} remains true for the modified
definition; so from now on we fix a finite Fischer-Ladner closed set
$\Sigma$ of formulas in NNF. We proceed to introduce the announced
notion of time-out:

\begin{definition}
  The set of \emph{timed-out formulas} $\phi,\psi$ is generated by the
  grammar
  \begin{equation*}
    \phi,\psi::=\bot\mid\top\mid\phi\land\psi\mid\phi\lor\psi\mid
    \hearts\phi\mid \dual\hearts\phi\mid
    \sharp^\kappa_\gamma\rho\mid \flat_{\dual\gamma}\rho\qquad
    (\kappa\in\omega+1,\rho\in\Lang_\sharp)
  \end{equation*}
  where $\gamma\in\Gamma,\hearts\in\Lambda$, subject to the
  restriction that $\phi$ is a timed-out formula only in case~$\phi$
  has at most one subformula of the form $\sharp^\kappa_\gamma\chi$
  with $\kappa<\omega$ (which however may occur any number of times),
  and for this $\sharp^\kappa_\gamma\chi$,
  \begin{itemize}
  \item $\sharp^\omega_\gamma\chi$ is not a subformula of $\phi$;
    and 
  \item  whenever $\sharp^\omega_\delta\rho$ is a subformula of
    $\phi$, then $\sharp_\delta\rho$ is a subformula of $\chi$.
  \end{itemize}
  In this case, we define the \emph{time-out} $\tau(\phi)$ of
  $\phi$ to be $\kappa$, and $\tau(\phi)=\omega$ otherwise
  (i.e.\ if $\phi$ does not contain any subformula of the form
  $\sharp^\kappa_\gamma\chi$ with $\kappa<\omega$).  The time-out
  gives the number of steps left until satisfaction of the eventuality
  $\sharp_\gamma\chi$, with time-out $\omega$ signifying an
  unspecified number of steps (note that time-outs are never
  associated with $\flat$-formulas).

  We define two translations $s$ and $t$ of timed-out formulas into
  $\Lang_\sharp$, given by commutation with Boolean and modal
  operators,
  $(\flat_{\dual\gamma}\rho)^s=(\flat_{\dual\gamma}\rho)^t=\flat_{\dual\gamma}\rho$,
  and
\begin{equation*}
    (\sharp^\omega_\gamma\rho)^s  =\sharp_\gamma\rho\qquad
    (\sharp^i_\gamma\rho)^s  =
    \gamma(\rho)^i(\bot)\quad(i<\omega) \qquad
    (\sharp^\kappa_\gamma\rho)^t  = \sharp_\gamma\rho.
  \end{equation*}
  Thus, $s$ unfolds fixpoints as prescribed by their time-outs, and
  $t$ just removes time-outs.  Both translations extend to sets of
  formulas. For timed-out formulas $\phi,\psi$, we put
  $\phi\preceq\psi$ iff $\phi^t=\psi^t$ and
  $\tau(\phi)\le\tau(\psi)$. That is, $\phi\preceq\psi$ iff
  $\phi$ is the same as $\psi$ up to possible decrease of the
  time-out. Given a set $\Sigma$ of formulas, a timed-out formula
  $\phi$ is a \emph{timed-out $\Sigma$-formula} if
  $\phi^t\in\Sigma$. 
\end{definition}

\noindent Notice that unlike formulas in NNF, timed-out formulas are
not closed under negation because time-outs can only appear on least
fixpoints. The point of the definition of timed-out formulas is that
every standard formula $\phi$ has at most one candidate subformula at
which one can insert a time-out, namely the greatest element under the
subformula ordering among the subformulas of $\phi$ which are
$\sharp$-formulas, if such a greatest element exists and is not in
scope of a $\flat$-operator. This enables the simple definition of
$\preceq$, which by the preceding discussion has the following
property.
\begin{lemma}\label{lem:linear}
  For every formula $\phi$, the preimage of $\phi$ under the
  translation $t$ is well-ordered by $\preceq$.
\end{lemma}

\noindent At the same time, timed-out formulas are stable under
unfolding:

\begin{lemma}\label{lem:FL-timed-out}
  If $\sharp^\kappa_\gamma\phi$ is a timed-out formula, then so is
  $\gamma(\phi,\sharp^\kappa_\gamma\phi)$.
\end{lemma}
\begin{proof}
    As $\phi$ is a (standard) formula, it is clear that
  $\gamma(\phi,\sharp^\kappa_\gamma\phi)$ cannot contain formulas of the
  form $\sharp^\lambda_\delta\rho$ with $\lambda<\omega$ other than
  $\sharp^\kappa_\gamma\phi$. By well-foundedness of the subformula
  relation, $\gamma(\phi,\sharp^\kappa_\gamma\phi)$ cannot contain
  $\sharp^\omega_\gamma\phi$. Finally, the only way
  $\sharp$-subformulas can arise in
  $\gamma(\phi,\sharp^\kappa_\gamma\phi)$ is as subformulas of
  $\sharp^\kappa_\gamma\phi$. \rightqed
\end{proof}
\noindent We have the expected relationship between the translations
regarding provable entailment:
\begin{lemma}\label{lem:trans-entail}
  \begin{enumerate}
  \item\label{item:trans-entail-t} For every timed-out formula $\phi$,
    $\entails\phi^s\to\phi^t$.
  \item\label{item:trans-entail-s} If $\phi\preceq\psi$ for timed-out formulas $\phi,\psi$, then
    $\entails\phi^s\to\psi^s$.
  \end{enumerate}
 \end{lemma}
\begin{proof}
  By iterated application of the unfolding axiom and monotonicity of
  modal fixpoint schemes. \rightqed
\end{proof}
\noindent States of the model will be sets of formulas satisfying a
timed-out version of the usual expandedness requirement.
\begin{definition}\label{def:timed-out-atom}
  A \emph{timed-out $\Sigma$-atom} is a maximal set $A$ of timed-out
  $\Sigma$-formulas such that
  \begin{itemize}
  \item the translation $t$ is injective on $A$, and
  \item $A^s$ is consistent.
  \end{itemize}
Here, maximality is w.r.t.\ $\sqsubseteq$ where $A\sqsubseteq B$ iff
for all $\phi\in A$, there exists a (necessarily unique)
$\phi'\in B$ such that $\phi'\preceq\phi$; intuitively: $B$
contains $A$ up to possible decrease of time-outs.
We write $\upA$ for the upwards closure of $A$ under $\preceq$
(i.e.\ if $\phi\in \upA$ and $\phi\preceq\phi'$ then
$\phi'\in \upA$).
\end{definition}
To prove finiteness of the model we construct, we use the fact that
finite product orderings $(\omega+1)^k$ are well-quasi-orders, and in
particular have only finite anti-chains~\cite{lave:well76}.
\begin{lemma}[Timed-out Lindenbaum lemma]\label{lem:to-at-fin}
  \begin{enumerate}
  \item\label{claim:at-fin} The set of timed-out $\Sigma$-atoms is
    finite.
  \item\label{claim:at-lind-timed} For every set $A_0$ of timed-out
    $\Sigma$-formulas such that $A_0^s$ is consistent, there exists a
    timed-out $\Sigma$-atom $A$ such that $A_0\sqsubseteq A$.
  \item\label{claim:at-lind-normal} For every consistent subset
    $C\subseteq\Sigma$, there exists a timed-out $\Sigma$-atom $A$
    such that $C\subseteq A^t$.
  \end{enumerate}
\end{lemma}
\begin{proof}
  Claim~(\ref{claim:at-lind-normal}) follows
  from~(\ref{claim:at-lind-timed}) by converting $C$ into a set of of
  timed-out formulas with time-out $\omega$. We
  prove~(\ref{claim:at-fin}) and~(\ref{claim:at-lind-timed}) in one
  go. Note first that we can assume w.l.o.g.\ that $t$ is injective on
  $A_0$. We define a \emph{$\Sigma$-atom} to be a maximally consistent
  subset of $\Sigma$. The set of $\Sigma$-atoms is finite, so for
  every set $A_0$ as in the statement there exists a $\Sigma$-atom $C$
  such that $A_0^t\subseteq C$. Let $\FA$ be the set of sets $A$ of
  timed-out $\Sigma$-formulas such that $A^s$ is consistent, $t$ is
  injective on $A$, and $A^t=C$; the timed-out $\Sigma$-atoms whose
  $t$-image is $C$ are the $\sqsubseteq$-maximal elements
  of~$\FA$. Now every element $A\in\FA$ is uniquely determined by the
  $C$-tuple $\iota(A)$ of time-outs it induces (explicitly, for
  $\phi\in C$, the $\phi$-component of $\iota(A)$ is $\tau(\phi')$
  where $\phi'$ is the unique formula in $A$ such that
  $(\phi')^t=\phi$), and for $A,B\in\FA$, $A\sqsubseteq B$ iff
  $\iota(B)\le\iota(A)$ in the componentwise ordering. Thus, $\FA$ is
  order-isomorphic to a subset of the finite power $(\omega+1)^C$ of
  the well-ordering $\omega+1$.  It follows from the theory of
  \emph{well-quasi-orders} that $(\omega+1)^C$ is a well-quasi-order,
  which means that for every subset $F$ of $(\omega+1)^C$, every
  element of $F$ is above one of finitely many minimal elements of
  $F$~\cite{lave:well76}; applying this to $\FA$
  proves~(\ref{claim:at-fin}) and~(\ref{claim:at-lind-timed})
  (for~(\ref{claim:at-fin}), recall additionally that the set of
  $\Sigma$-atoms is finite). \rightqed
\end{proof}

\noindent As usual, the proof of the truth lemma will depend on a set
of Hintikka-like properties:

\begin{lemma}\label{lem:pred-at}
  If $A$ is a timed-out $\Sigma$-atom, then 
  \begin{myenumerate}
  \item\label{item:and} if $\phi\land\psi\in A$ then $\phi\in \upA$ and
    $\psi\in \upA$;
  \item\label{item:or} if $\phi\lor\psi\in A$ then $\phi\in \upA$ or
    $\psi\in \upA$;
  \item\label{item:bot} $\bot\notin \upA$;
  \item\label{item:sharp1} if $\sharp^\kappa_\gamma\phi\in A$, then
    $\kappa<\omega$;
  \item\label{item:sharp2} $\sharp^\kappa_\gamma\phi\in A$ iff
    $\gamma(\phi,\sharp^{\kappa-1}_{\gamma}\phi)\in A$;
  \item\label{item:flat} $\flat_\gamma\phi\in A$ iff
    $\gamma(\phi,\flat_{\gamma}\phi)\in A$.
  \end{myenumerate}
\end{lemma}
\begin{proof}
  \emph{\ref{item:and}:} If $\phi\land\psi\in A$, then $(A')^s$ is
  consistent for $A':=A\cup\{\phi\}$. By maximality of $A$, it follows
  that $t$ is not injective on $A'$, i.e.\ there is $\phi'\in A$ such
  that $(\phi')^t=\phi^t$. Again by maximality,
  $\phi\npreceq\phi'$, so $\phi'\preceq\phi$ by
  Lemma~\ref{lem:linear}, and hence $\phi\in\upA$.

  \emph{\ref{item:or}:} If $\phi\lor\psi\in A$, then either
  $(A\cup\{\phi\})^s$ or $(A\cup\{\psi\})^s$ is consistent; in both
  cases, proceed as for \ref{item:and}.

  \emph{\ref{item:bot}:} Clear.

  \emph{\ref{item:sharp1}:} If $\sharp^\kappa_\gamma\phi\in A$, then
  by Lemma~\ref{lem:constructive-dual} and by finiteness of $A$, there
  is $i<\omega$ such that $(A\cup\{\sharp^i_\gamma\phi\})^s$ is
  consistent. By maximality of $A$, $\kappa\le i$.

  \emph{\ref{item:sharp2}:} Both formulas have $\sharp_\gamma\phi$ as
  the greatest $\sharp$-subformula of their $t$-translation, and their
  $s$-translations are syntactically equal. Therefore if, e.g.,
  $\sharp^\kappa_\gamma\phi\in A$, then $(A')^s$, where
  $A'=A\cup\{\gamma(\phi,\sharp^{\kappa-1}_{\gamma}\phi)\}$, is
  consistent; moreover $(A')^t\subseteq\Sigma$. By maximality of $A$,
  it follows that the translation $t$ does not remain injective on
  $A'$, i.e.\ we have
  $\chi^t=(\gamma(\phi,\sharp^{\kappa-1}_{\gamma}\phi))^t$ for some
  $\chi\in A$. Again by maximality of $A$ and Lemma~\ref{lem:linear},
  we must have
  $\chi\preceq\gamma(\phi,\sharp^{\kappa-1}_{\gamma}\phi)$, so that
  $\gamma(\phi,\sharp^{\kappa-1}_{\gamma}\phi)\in\upA$. Similarly, we
  show that $\gamma(\phi,\sharp^{\kappa-1}_{\gamma}\phi)\in A$ implies
  $\sharp^\kappa_\gamma\phi\in \upA$. Together, these implications
  yield the same implications for $A$ in place of~$\upA$: e.g.\ if
  $\sharp^i_\gamma\phi\in A$ for $i<\omega$ (where necessarily $i>0$),
  then $\gamma(\phi,\sharp^{i-1}_{\gamma}\phi)\in\upA$, so
  $\gamma(\phi,\sharp^{j}_{\gamma}\phi)\in A$ for some $j\le i-1$.
  Thus $\sharp^{j+1}_{\gamma}\phi\in\upA$, so that $i\le j+1$,
  implying $j=i-1$.

  \emph{\ref{item:flat}:} Similar to (and easier
  than)~\ref{item:sharp2}.\rightqed
\end{proof}

\noindent We denote by $\CT_\Sigma$ the (by
Lemma~\ref{lem:to-at-fin}, finite) set of timed-out $\Sigma$-atoms,
and proceed to construct a model with carrier $\CT_\Sigma$. We need to
construct a coalgebra rather than just a relational structure; this
coalgebra should adequately implement the formulas contained in the
states. This requirement is encapsulated in the notion of
\emph{coherence}:

\begin{definition}\label{def:coherence}
  A coalgebra structure $\xi$ on $\CT_\Sigma$ is \emph{coherent} if
  for every $A\in\CT_\Sigma$ and every timed-out $\Sigma$-formula
  $\spadesuit\rho$ where $\spadesuit$ is either a modal operator
  $\hearts\in\Lambda$ or a dual modal operator $\dual\hearts$,
  \begin{equation*}
    \xi(A)\in\Sem{\spadesuit}_{\CT_\Sigma}(\hat\rho)\text{ whenever }\spadesuit\rho\in A,
  \end{equation*}
  where $\hat\rho=\{B\in \CT_\Sigma\mid \rho\in B\}$.
 \end{definition}
 \noindent Existence of such a coherent coalgebra structure relies on
 one-step completeness.

 \begin{lemma}[Existence lemma]\label{lem:existence}
  There exists a coherent coalgebra structure on~$\CT_\Sigma$.
\end{lemma}
\begin{proof}
  The proof follows the same pattern as the one for the fixpoint-free
  case~\cite{Schroder07}. We can construct the coalgebra structure
  $\xi$ pointwise. So let $A\in\CT_\Sigma$; in the notation of
  Definition~\ref{def:coherence}, we have to show that there exists
  $t\in T\CT_\Sigma$ such that
  $t\in\Sem{\spadesuit}_{\CT_\Sigma}(\hat\phi)$ whenever
  $\spadesuit\phi\in A$. With a view to deriving a contradiction,
  assume the contrary. Then $T\CT_\Sigma,\tau\models\psi$ where
  \begin{align*}
    \psi&=\textstyle\Lor_{\hearts\rho\in A}\neg\hearts a_{\hearts\rho}
    \lor\Lor_{\dual\hearts\rho\in A}\hearts b_{\dual\hearts\rho}\\
    \tau(a_{\hearts\rho})&=\hat\rho\\
    \tau(b_{\dual\hearts\rho})&=\CT_\Sigma-\hat\rho
  \end{align*}
  for pairwise distinct propositional variables
  $a_{\hearts\rho},b_{\dual\hearts\rho}$. By one-step completeness, it
  follows that $\psi$ is provable over $\CT_\Sigma,\tau$, so by
  Lemma~\ref{lem:mon-rules-complete} there is an $\Rules$-derivable
  monotone one-step rule $\phi/\chi\in\Rules$ such that
  $\chi\PLentails\psi$ and $\CT_\Sigma,\tau\models\phi$. Now let
  $\theta$ be the substitution defined by
  $\theta(a_{\hearts\rho})=\rho^s$ and
  $\theta(b_{\dual\hearts\rho})=\neg(\rho^s)$.  Then $A^s$
  propositionally entails $\neg\psi\theta$.  Since~$A^s$ is
  consistent, we are done once we show that $\psi\theta$ is
  provable. To this end, it suffices to show that $\phi\theta$ is
  provable. Assume the contrary, i.e.\ $\neg\phi\theta$ is
  consistent. Then there is a conjunctive clause $\phi_0$ in the
  disjunctive normal form of $\neg\phi$ such that $\phi_0\theta$ is
  consistent. Since $\phi/\chi$ is monotone and $\chi\PLentails\psi$
  just means that $\chi$ is contained in $\psi$, $\phi$ is negative in
  the $a_{\hearts\rho}$ and positive in the $b_{\dual\hearts\rho}$;
  thus, $\phi_0$ is positive in the $a_{\hearts\rho}$ and negative in
  the $b_{\dual\hearts\rho}$. Therefore, $\phi_0\theta$ is, after
  removing double negations, of the form $\Land A^s$ for a set $A$ of
  timed-out $\Sigma$-formulas. By the timed-out Lindenbaum lemma
  (Lemma~\ref{lem:to-at-fin}), $A$ is contained in a timed-out
  $\Sigma$-atom $B\in\CT_\Sigma$. Then $B\in\Sem{\phi_0}\tau$, so
  $B\notin\Sem{\phi}\tau$, in contradiction to
  $\CT_\Sigma,\tau\models\phi$. \rightqed
\end{proof} 
\noindent Next, we establish that a coherent coalgebra does what we
expect:
\begin{lemma}[Truth lemma]\label{lem:truth}
  Let $\xi$ be a coherent coalgebra structure on $\CT_\Sigma$. If
  $A\in\CT_\Sigma$ and $\phi\in A$, then
  $A\models_{(\CT_\Sigma,\xi)} \phi^s$.
\end{lemma}
 \begin{proof}
   Induction over timed-out $\Sigma$-formulas $\phi$ using the
   lexicographic product of the subterm ordering on $\phi^t$ with
   $\preceq$ as the induction measure. We note that by
   Lemma~\ref{lem:trans-entail} (and soundness), the inductive
   hypothesis can be strengthened to apply also to $\phi\in\upA$. The
   case $\phi=\top$ is trivial. The steps for $\bot$, $\land$, and
   $\lor$ are taken care of by Lemma~\ref{lem:pred-at}. The step for
   modal operators $\hearts$ or $\dual\hearts$ is by coherence and
   monotonicity.

   Next, we discharge the case $\phi=\sharp^\kappa_\gamma\psi$. By
   Lemma~\ref{lem:pred-at}, we have $\kappa<\omega$ and
   $\gamma(\psi,\sharp^{\kappa-1}_\gamma\psi)\in A$. We prove by a
   further induction on modal fixpoint schemes $\delta$ that
   $\delta(\psi,\sharp^{\kappa-1}_\gamma\psi)\in\upA$ implies that
   $A\models_{(\CT_\Sigma,\xi)}(\delta(\psi,\sharp^{\kappa-1}_\gamma\psi))^s$. The
   case for the parameter variable is discharged by the inductive
   hypothesis applied to $\psi$, as $\psi^t$ is a proper subterm of
   $\phi^t$, while the case for the recursion variable is discharged
   by the inductive hypothesis applied to
   $\sharp^{\kappa-1}_\gamma\psi$ (which is strictly below
   $\sharp^\kappa_\gamma\psi$ w.r.t.~$\preceq$ since
   $\kappa<\omega$). The cases for Boolean operations and modal
   operators are as in the outer induction. This finishes the inner
   induction, so that
   $A\models_{(\CT_\Sigma,\xi)}\gamma(\psi,\sharp^{\kappa-1}_\gamma\psi)$
   and hence $A\models_{(\CT_\Sigma,\xi)}\sharp^\kappa_\gamma\psi$.
  
   \newcommand{\TExt}[1]{\CT_\Sigma(#1)}

   Finally, the case $\phi=\flat_\gamma\psi$ is discharged by
   coinduction. For timed-out $\Sigma$-formulas $\rho$, we put
   $\CT_\Sigma(\rho)=\{A\in \CT_\Sigma\mid \rho\in A\}$ (this is like
   $\hat\rho$ in Definition~\ref{def:coherence} but will be applied to
   long expressions).  As $\Sem{\flat_\gamma\psi}_{(\CT_\Sigma,\xi)}$
   is a greatest fixpoint, it suffices to prove that
   $\TExt{\flat_\gamma\psi}$ is semantically a postfixpoint of
   $\gamma(\psi)$, i.e.\
   \begin{equation}\label{eq:goal}
     \TExt{\flat_\gamma\psi}\subseteq
     \Sem{\gamma(\psi)}(\TExt{\flat_\gamma\psi}).
   \end{equation}
   To begin, we prove by induction on modal fixpoint schemes $\delta$
   that for all timed-out $\Sigma$-formulas $\chi$,
   \begin{equation}\label{eq:hat-eval}
     \TExt{\delta(\psi,\chi)}\subseteq\Sem{\delta(\psi)}(\CT_\Sigma(\chi)).
   \end{equation}
   The cases for $\bot$, $\top$, and the recursion variable $x$ are
   clear. The case for the parameter variable $p$ is discharged by the
   outer inductive hypothesis applied to $\psi$. The cases for $\land$
   and $\lor$ are by Lemma~\ref{lem:pred-at}; e.g., we have
  \begin{align*}
    & \TExt{(\delta_1\lor\delta_2)(\psi,\chi)} \\
    & \subseteq \TExt{\delta_1(\psi,\chi)}\cup\TExt{\delta_2(\psi,\chi)}
    && \text{(Lemma~\ref{lem:pred-at})}\\
    &\subseteq\Sem{\delta_2(\psi)}(\CT_\Sigma(\chi))\cup
      \Sem{\delta_1(\psi)}(\CT_\Sigma(\chi))
    && \text{(induction)}\\
    & = \Sem{(\delta_2\lor\delta_1)(\psi)}(\CT_\Sigma(\chi)).
  \end{align*}
  Finally, the case for modal operators is by coherence: for
  $\spadesuit$ being either a modal operator $\hearts\in\Lambda$ or a
  dual modal operator $\dual\hearts$, we have
  \begin{align*}
    & \TExt{\spadesuit\delta(\psi,\chi)} \\
    & \subseteq \xi^{-1}[\Sem{\spadesuit}_{\CT_\Sigma}(\TExt{\delta(\psi,\chi)})]
    && \text{(coherence)}\\
    &\subseteq \xi^{-1}[\Sem{\spadesuit}_{\CT_\Sigma}(\Sem{\delta(\psi)}{\CT_\Sigma(\chi)})]
    && \text{(induction, monotonicity)}\\
    &= \Sem{\spadesuit\delta(\psi)}(\CT_\Sigma(\chi)).
  \end{align*}
  By (\ref{eq:hat-eval}), we reduce our goal (\ref{eq:goal}) to
  \begin{equation*}
    \TExt{\flat_\gamma\phi}\subseteq\TExt{\gamma(\psi,\flat_\gamma\psi)},
  \end{equation*}
  which holds by Lemma~\ref{lem:pred-at}. \rightqed
 \end{proof}

\noindent In summary, we have proved completeness of the Kozen-Park
axiomatization:
 
\begin{theorem}[Completeness]\label{thm:main}
  If $\Gamma$ is admissible and $\Rules$ is one-step complete, then
  the logic $\Lang_\sharp$ is complete over finite models.
\end{theorem}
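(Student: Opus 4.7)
The plan is to reduce completeness to satisfiability via the standard contrapositive: it suffices to show that every consistent formula $\varphi\in\FLang_\sharp$ is satisfiable in a finite coalgebraic model. From there, the proof is essentially a matter of chaining together the tableau machinery built up in Section~\ref{sec:models}.

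First I would fix a consistent formula $\varphi$ and form the Fischer-Ladner closure $\Sigma:=\FL(\varphi)$, which is finite by Kozen's lemma. Applying Lemma~\ref{lem:tableau} to $\varphi\in\Sigma$ yields a finite timed-out $\Sigma$-tableau $(\CT,R,l)$ for $\varphi$; by definition, this means there is some node $n_0\in\CT$ and some timed-out formula $\varphi'\in l(n_0)$ with $(\varphi')^t=\varphi$. Next, Lemma~\ref{lem:existence} (model existence) supplies a coherent coalgebra structure $\xi:\CT\to T\CT$ on the finite carrier $\CT$; this is where one-step cutfree completeness of $\Rules$ is consumed, via the demand clause in the tableau definition.

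Then I would invoke Lemma~\ref{lem:truth} (truth lemma) at the node $n_0$ to conclude $n_0\in\Sem{\varphi'}_{(\CT,\xi)}$, interpreting the timed-out formula $\varphi'$ semantically via its $s$-translation as in the proof sketch of the truth lemma. Since each finite time-out $\sharp_\gamma(\rho)^\kappa$ unfolds to the $\kappa$-th lower approximant $\gamma(\rho)^\kappa(\bot)$, which entails $\sharp_\gamma(\rho)$, and since time-outs occur only positively in $\varphi'$, we have the semantic implication $(\varphi')^s\modimpl(\varphi')^t=\varphi$; monotonicity of all operators makes this routine by induction on $\varphi'$. Combining these observations gives $n_0\in\Sem{\varphi}_{(\CT,\xi)}$, so $\varphi$ is satisfiable in the finite $T$-coalgebra $(\CT,\xi)$, which is what we wanted.

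The main obstacle in this final assembly is really already absorbed into the preceding development: constructivity via finitary $\CO$-adjointness (Corollary~\ref{cor:admissible}) is what legitimises the use of finite time-outs in the tableau at all, one-step cutfree completeness of $\Rules$ is what makes coherent coalgebra structures exist on arbitrary tableaux, and Lemma~\ref{lem:to-at-fin} (finiteness of timed-out $\Sigma$-atoms, via the well-quasi-order argument on $(\omega+1)^k$) is what keeps the carrier $\CT$ finite. At the assembly stage itself, the only delicate point is making sure that the passage from $\varphi'\in l(n_0)$ to $n_0\models\varphi$ correctly handles the translations $s$ and $t$; this is purely a monotonicity argument and poses no real difficulty. \qed
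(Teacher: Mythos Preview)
Your proposal is correct and follows essentially the same route as the paper: reduce to satisfiability of consistent formulas, then chain Lemma~\ref{lem:tableau}, the model existence lemma (Lemma~\ref{lem:existence}), and the truth lemma (Lemma~\ref{lem:truth}). Your additional remark about the passage from $\varphi'$ to $\varphi$ via the implication $(\varphi')^s\modimpl(\varphi')^t$ is a harmless elaboration; in the paper this step is absorbed into the statement of the truth lemma itself, which already delivers $n\in\Sem{\varphi^t}_{(\CT,\xi)}$ for $\varphi\in l(n)$.
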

\begin{proof}
  We have to show that every consistent formula $\phi$ is satisfiable
  in a finite $T$-coalgebra, where by Lemma~\ref{lem:NNF} we can
  assume that $\phi$ is in NNF. This follows from the preceding lemmas
  by the usual pattern: Let $\Sigma$ be the least Fischer-Ladner
  closed set of formulas in NNF containing $\phi$. Then~$\phi$ is
  contained in the $t$-image of some timed-out $\Sigma$-atom by the
  timed-out Lindenbaum lemma (Lemma~\ref{lem:to-at-fin}), hence by the
  truth lemma (Lemma~\ref{lem:truth}) satisfied in a coherent
  coalgebra on the set $\CT_\Sigma$ of timed-out $\Sigma$-atoms, which
  exists by Lemma~\ref{lem:existence}; the set $\CT_\Sigma$ is finite
  by Lemma~\ref{lem:to-at-fin}. \rightqed
\end{proof}
\noindent 
We enumerate a few concrete instances of this result:

\begin{example}
  We obtain that the Kozen-Park axiomatization (in combination with
  the modal rules and propositional reasoning) is complete for the
  following logics.
  \begin{enumerate}
  \item All admissible flat fragments of the standard relational
    $\mu$-calculus, interpreted over unrestricted, serial, or
    deterministic Kripke models. We thus recover the known
    completeness results for LTL~\cite{GabbayEA80} and serial and
    non-serial CTL~\cite{EmersonHalpern85}, but also for logics
    featuring operators outside even CTL$^*$ as discussed in
    Example~\ref{expl:logics}.\ref{item:K}. Non-serial CTL is already
    covered by the generic results of Santocanale and
    Venema~\citeyear{SantocanaleVenema10}.
  \item All admissible flat fragments of the graded $\mu$-calculus,
    including ones featuring the operators `the current state is the
    root of a finite binary tree all whose leaves satisfy \dots',
    `\dots holds somewhere on every infinite $k+1$-ary tree starting
    at the current state', and `the current state is the root of a
    finite binary tree all whose leaves are at even distance from the
    root and satisfy \dots' discussed in
    Example~\ref{expl:logics}.\ref{item:graded}.
  \item All admissible flat probabilistic fixpoint logics, including
    ones featuring linear inequalities on probabilities.
  \item All admissible flat conditional fixpoint logics.
  \item All admissible flat fragments of the monotone $\mu$-calculus
    and the serial monotone $\mu$-calculus, including the
    star-nesting-free fragments of CPDL and game logic. 
  \item All admissible flat fragments of the alternating-time
    $\mu$-calculus, including alternating-time temporal logic ATL but
    also logics going beyond ATL, e.g.\ ones featuring the operator
    `\dots holds in all even states along any path' discussed in
    Example~\ref{expl:logics}.\ref{item:amc}. In fact, ATL appears to
    be the only example outside the relational world for which a
    completeness result of this type was previously
    known~\cite{GorankoVanDrimmelen06}.
  \end{enumerate}
  In most of these examples except graded and probabilistic fixpoint
  logics, we actually obtain finite axiomatizability, and locally
  finite axiomatizability in the case of graded fixpoint logics
  (Remark~\ref{rem:finite-ax}).
\end{example}

\section{Conclusions}\label{sec:conclusions}

\noindent We have lifted the completeness theorem for flat modal
fixpoint logics~\cite{SantocanaleVenema10} to the level of generality
of coalgebraic logic. Specifically, we have given a Kozen-Park style
axiomatization for fixpoint operators, and we have shown this
axiomatization to be sound and complete under the conditions that (i)
the defining formulas of the fixpoint operators satisfy a mild
syntactic restriction, and (ii) the coalgebraic base logic is
axiomatized by a one-step complete rule set.  This result covers,
e.g., probabilistic fixpoint logics and flat fragments of the monotone
$\mu$-calculus, the ambient fixpoint logic of Parikh's game
logic~\cite{Parikh85} and concurrent PDL~\cite{Peleg87}. Further
instances include completeness of flat fragments of the graded
$\mu$-calculus~\cite{KupfermanEA02}, to our knowledge the first
completeness result for any graded fixpoint logic, and completeness of
flat fragments of the alternating-time $\mu$-calculus~\cite{AlurEA02},
with alternating-time temporal logic (ATL) being apparently the only
previously known example~\cite{GorankoVanDrimmelen06}. In those
examples that have finite modal similarity type, in particular for
alternating-time fixpoint logics, the axiomatization we obtain is
finite.

A core technical point in the proof was to show that essentially all
monotone modal operators (including nested ones like $\Box\Box$, as
long as the nesting depth is uniform) are finitary $\CO$-adjoints in
the sense of Santocanale~\citeyear{Santocanale08}, and hence induce
\emph{constructive} fixpoint operators that can be approximated in
$\omega$ steps in the Lindenbaum algebra. This has enabled a model
construction using explicit time-outs for least fixpoint formulas in
the spirit of the completeness proof for the aconjunctive fragment of
the $\mu$-calculus~\cite{Kozen83}, which relies on a judicious
definition of timed-out formula.

A remaining open problem is to extend the completeness result to
larger fragments of the coalgebraic $\mu$-calculus beyond the single
variable fragment covered here, first and foremost the
alternation-free fragment, and eventually the full coalgebraic
$\mu$-calculus.

\bibliographystyle{ACM-Reference-Format-Journals}
  \bibliography{coalgml}


\providecommand{\noopsort}[1]{}
\begin{thebibliography}{00}


\ifx \showCODEN    \undefined \def \showCODEN     #1{\unskip}     \fi
\ifx \showDOI      \undefined \def \showDOI       #1{{\tt DOI:}\penalty0{#1}\ }
  \fi
\ifx \showISBNx    \undefined \def \showISBNx     #1{\unskip}     \fi
\ifx \showISBNxiii \undefined \def \showISBNxiii  #1{\unskip}     \fi
\ifx \showISSN     \undefined \def \showISSN      #1{\unskip}     \fi
\ifx \showLCCN     \undefined \def \showLCCN      #1{\unskip}     \fi
\ifx \shownote     \undefined \def \shownote      #1{#1}          \fi
\ifx \showarticletitle \undefined \def \showarticletitle #1{#1}   \fi
\ifx \showURL      \undefined \def \showURL       #1{#1}          \fi

\bibitem[\protect\citeauthoryear{Alur, Henzinger, and Kupferman}{Alur
  et~al\mbox{.}}{2002}]%
        {AlurEA02}
R.~Alur, T.~Henzinger, and O.~Kupferman. 2002.
\newblock \showarticletitle{Alternating-time temporal logic}.
\newblock {\em J.\ ACM\/}  {49} (2002), 672--713.
\newblock


\bibitem[\protect\citeauthoryear{Baader, Calvanese, McGuinness, Nardi, and
  Patel-Schneider}{Baader et~al\mbox{.}}{2003}]%
        {BaaderEA03}
F.~Baader, D.~Calvanese, D.~McGuinness, D.~Nardi, and P.~Patel-Schneider
  (Eds.). 2003.
\newblock {\em The Description Logic Handbook}.
\newblock Cambridge University Press.
\newblock
\showISBNx{0-521-78176-0}


\bibitem[\protect\citeauthoryear{Chellas}{Chellas}{1980}]%
        {Chellas80}
B.~Chellas. 1980.
\newblock {\em Modal Logic}.
\newblock Cambridge University Press.
\newblock


\bibitem[\protect\citeauthoryear{C{\^{\i}}rstea, Kupke, and
  Pattinson}{C{\^{\i}}rstea et~al\mbox{.}}{2011}]%
        {CirsteaEA11}
C.~C{\^{\i}}rstea, C.~Kupke, and D.~Pattinson. 2011.
\newblock \showarticletitle{{EXPTIME} Tableaux for the Coalgebraic
  {$\mu$}-Calculus}.
\newblock {\em Log.\ Meth.\ Comput.\ Sci.\/} {7}, 3.03 (2011), 1--33.
\newblock


\bibitem[\protect\citeauthoryear{C{\^{\i}}rstea and Pattinson}{C{\^{\i}}rstea
  and Pattinson}{2007}]%
        {CirsteaPattinson07}
C.~C{\^{\i}}rstea and D.~Pattinson. 2007.
\newblock \showarticletitle{Modular construction of complete coalgebraic
  logics}.
\newblock {\em Theoret.\ Comput.\ Sci.\/}  {388} (2007), 83--108.
\newblock


\bibitem[\protect\citeauthoryear{D'Agostino and Visser}{D'Agostino and
  Visser}{2002}]%
        {DAgostinoVisser02}
G.~D'Agostino and A.~Visser. 2002.
\newblock \showarticletitle{Finality regained: A coalgebraic study of
  {Scott}-sets and multisets}.
\newblock {\em Arch.\ Math.\ Logic\/}  {41} (2002), 267--298.
\newblock


\bibitem[\protect\citeauthoryear{Dam}{Dam}{1994}]%
        {Dam94}
M.~Dam. 1994.
\newblock \showarticletitle{{CTL}* and {ECTL}* as Fragments of the Modal
  mu-Calculus}.
\newblock {\em Theoret.\ Comput.\ Sci.\/}  {126} (1994), 77--96.
\newblock


\bibitem[\protect\citeauthoryear{{De Caro}}{{De Caro}}{1988}]%
        {DeCaro88}
F.~{De Caro}. 1988.
\newblock \showarticletitle{Graded modalities {II}}.
\newblock {\em Stud.\ Log.\/}  {47} (1988), 1--10.
\newblock


\bibitem[\protect\citeauthoryear{Emerson}{Emerson}{1990}]%
        {Emerson90}
E.~Emerson. 1990.
\newblock \showarticletitle{Temporal and Modal Logic}.
\newblock In {\em Handbook of Theoretical Computer Science, Volume {B:} Formal
  Models and Semantics}. Elsevier, 995--1072.
\newblock


\bibitem[\protect\citeauthoryear{Emerson and Clarke}{Emerson and
  Clarke}{1982}]%
        {EmersonClarke82}
E.~Emerson and E.~Clarke. 1982.
\newblock \showarticletitle{Using Branching Time Temporal Logic to Synthesize
  Synchronization Skeletons}.
\newblock {\em Sci.\ Comput.\ Program.\/}  {2} (1982), 241--266.
\newblock


\bibitem[\protect\citeauthoryear{Emerson and Halpern}{Emerson and
  Halpern}{1985}]%
        {EmersonHalpern85}
E.~Emerson and J.~Halpern. 1985.
\newblock \showarticletitle{Decision Procedures and Expressiveness in the
  Temporal Logic of Branching Time}.
\newblock {\em J.\ Comput.\ System Sci.\/}  {30} (1985), 1--24.
\newblock


\bibitem[\protect\citeauthoryear{Emerson and Lei}{Emerson and Lei}{1986}]%
        {EmersonLei86}
E.~Emerson and C.-L. Lei. 1986.
\newblock \showarticletitle{Efficient Model Checking in Fragments of the
  Propositional Mu-Calculus}. In {\em Logic in Computer Science, LICS 86}.
  IEEE, 267--278.
\newblock


\bibitem[\protect\citeauthoryear{Fagin and Halpern}{Fagin and Halpern}{1994}]%
        {FaginHalpern94}
R.~Fagin and J.~Halpern. 1994.
\newblock \showarticletitle{Reasoning about knowledge and probability}.
\newblock {\em J.\ ACM\/}  {41} (1994), 340--367.
\newblock


\bibitem[\protect\citeauthoryear{Fine}{Fine}{1972}]%
        {Fine72}
K.~Fine. 1972.
\newblock \showarticletitle{In so many possible worlds}.
\newblock {\em Notre Dame J.\ Formal Logic\/}  {13} (1972), 516--520.
\newblock


\bibitem[\protect\citeauthoryear{Friedman and Halpern}{Friedman and
  Halpern}{1994}]%
        {FriedmanHalpern94}
N.~Friedman and J.~Halpern. 1994.
\newblock \showarticletitle{On the Complexity of Conditional Logics}.
\newblock In {\em Knowledge Representation and Reasoning, KR 94}, {Jon Doyle},
  {Erik Sandewall}, {and} {Pietro Torasso} (Eds.). Morgan Kaufmann, 202--213.
\newblock


\bibitem[\protect\citeauthoryear{Gabbay, Pnueli, Shelah, and Stavi}{Gabbay
  et~al\mbox{.}}{1980}]%
        {GabbayEA80}
D.~Gabbay, A.~Pnueli, S.~Shelah, and J.~Stavi. 1980.
\newblock \showarticletitle{On the Temporal Basis of Fairness}. In {\em
  Principles of Programming Languages, POPL 1980}, {Paul Abrahams}, {Richard
  Lipton}, {and} {Stephen Bourne} (Eds.). {ACM} Press, 163--173.
\newblock
\showISBNx{0-89791-011-7}


\bibitem[\protect\citeauthoryear{Goranko and van Drimmelen}{Goranko and van
  Drimmelen}{2006}]%
        {GorankoVanDrimmelen06}
V.~Goranko and G.~van Drimmelen. 2006.
\newblock \showarticletitle{Complete axiomatization and decidability of
  Alternating-time temporal logic}.
\newblock {\em Theoret.\ Comput.\ Sci.\/}  {353} (2006), 93--117.
\newblock


\bibitem[\protect\citeauthoryear{Hausmann and Schr\"oder}{Hausmann and
  Schr\"oder}{2015}]%
        {HausmannSchroder15}
D.~Hausmann and L.~Schr\"oder. 2015.
\newblock \showarticletitle{Global Caching for the Flat Coalgebraic
  {$\mu$}-Calculus}. In {\em Temporal Representation and Reasoning, TIME 2015},
  {Fabio Grandi}, {Martin Lange}, {and} {Alessio Lomuscio} (Eds.). IEEE,
  121--130.
\newblock


\bibitem[\protect\citeauthoryear{Heifetz and Mongin}{Heifetz and
  Mongin}{2001}]%
        {HeifetzMongin01}
A.~Heifetz and P.~Mongin. 2001.
\newblock \showarticletitle{Probabilistic logic for type spaces}.
\newblock {\em Games and Economic Behavior\/}  {35} (2001), 31--53.
\newblock


\bibitem[\protect\citeauthoryear{Huth and Kwiatkowska}{Huth and
  Kwiatkowska}{1997}]%
        {HuthKwiatkowska97}
M.~Huth and M.~Kwiatkowska. 1997.
\newblock \showarticletitle{Quantitative Analysis and Model Checking}. In {\em
  Logic in Computer Science, LICS 1997}. {IEEE} Computer Society, 111--122.
\newblock
\showISBNx{0-8186-7925-5}


\bibitem[\protect\citeauthoryear{Kozen}{Kozen}{1983}]%
        {Kozen83}
D.~Kozen. 1983.
\newblock \showarticletitle{Results on the propositional {$\mu$-calculus}}.
\newblock {\em Theoret.\ Comput.\ Sci.\/}  {27} (1983), 333--354.
\newblock


\bibitem[\protect\citeauthoryear{Kozen and Parikh}{Kozen and Parikh}{1981}]%
        {KozenParikh81}
D.~Kozen and R.~Parikh. 1981.
\newblock \showarticletitle{An elementary proof of the completeness of {PDL}}.
\newblock {\em Theoret.\ Comput.\ Sci.\/}  {14} (1981), 113--118.
\newblock


\bibitem[\protect\citeauthoryear{Kupferman, Sattler, and Vardi}{Kupferman
  et~al\mbox{.}}{2002}]%
        {KupfermanEA02}
O.~Kupferman, U.~Sattler, and M.~Vardi. 2002.
\newblock \showarticletitle{The Complexity of the Graded {$\mu$}-Calculus}. In
  {\em Automated Deduction, CADE 2002}{\em , LNCS}, {Andrei Voronkov} (Ed.),
  Vol. 2392. Springer, 423--437.
\newblock
\showISBNx{3-540-43931-5}


\bibitem[\protect\citeauthoryear{Kupke and Pattinson}{Kupke and
  Pattinson}{2010}]%
        {KupkePattinson10}
C.~Kupke and D.~Pattinson. 2010.
\newblock \showarticletitle{On Modal Logics of Linear Inequalities}. In {\em
  Advances in Modal Logic, AiML 2010}, {Lev Beklemishev}, {Valentin Goranko},
  {and} {Valentin Shehtman} (Eds.). College Publications, 235--255.
\newblock
\showISBNx{978-1-84890-013-4}


\bibitem[\protect\citeauthoryear{Larsen and Skou}{Larsen and Skou}{1991}]%
        {LarsenSkou91}
K.~Larsen and A.~Skou. 1991.
\newblock \showarticletitle{Bisimulation through probabilistic testing}.
\newblock {\em Inf.\ Comput.\/}  {94} (1991), 1--28.
\newblock
\showISSN{0890-5401}


\bibitem[\protect\citeauthoryear{Laver}{Laver}{1976}]%
        {lave:well76}
R.~Laver. 1976.
\newblock \showarticletitle{Well-quasi-orders and sets of finite sequences}.
\newblock {\em Math.\ Proc.\ Cambridge Philos.\ Soc.\/}  {79} (1976), 1--10.
\newblock


\bibitem[\protect\citeauthoryear{Lewis}{Lewis}{1969}]%
        {Lewis69}
D.~Lewis. 1969.
\newblock {\em Convention, {A} Philosophical Study}.
\newblock Harvard University Press.
\newblock


\bibitem[\protect\citeauthoryear{Lichtenstein and Pnueli}{Lichtenstein and
  Pnueli}{2000}]%
        {LichtensteinPnueli00}
O.~Lichtenstein and A.~Pnueli. 2000.
\newblock \showarticletitle{Propositional Temporal Logics: Decidability and
  Completeness}.
\newblock {\em Logic Journal of the {IGPL}\/}  {8} (2000), 55--85.
\newblock


\bibitem[\protect\citeauthoryear{Liu, Song, Wang, and Zhang}{Liu
  et~al\mbox{.}}{2015}]%
        {LiuEA15}
W.~Liu, L.~Song, J.~Wang, and L.~Zhang. 2015.
\newblock \showarticletitle{A Simple Probabilistic Extension of Modal
  Mu-calculus}. In {\em International Joint Conference on Artificial
  Intelligence, {IJCAI} 2015}, {Qiang Yang} {and} {Michael Wooldridge} (Eds.).
  {AAAI} Press, 882--888.
\newblock
\showISBNx{978-1-57735-738-4}


\bibitem[\protect\citeauthoryear{Morgan and McIver}{Morgan and McIver}{1997}]%
        {MorganMcIver97}
C.~Morgan and A.~McIver. 1997.
\newblock \showarticletitle{A Probabilistic Temporal Calculus Based on
  Expectations}. In {\em Formal Methods Pacific 1997}. Springer, 4--22.
\newblock


\bibitem[\protect\citeauthoryear{Myers, Pattinson, and Schr{\"{o}}der}{Myers
  et~al\mbox{.}}{2009}]%
        {MyersEA09}
R.~Myers, D.~Pattinson, and L.~Schr{\"{o}}der. 2009.
\newblock \showarticletitle{Coalgebraic Hybrid Logic}. In {\em Foundations of
  Software Science and Computational Structures, {FOSSACS} 2009}{\em , LNCS},
  {Luca de~Alfaro} (Ed.), Vol. 5504. Springer, 137--151.
\newblock
\showISBNx{978-3-642-00595-4}


\bibitem[\protect\citeauthoryear{Olivetti, Pozzato, and Schwind}{Olivetti
  et~al\mbox{.}}{2007}]%
        {OlivettiEA07}
N.~Olivetti, G.~Pozzato, and C.~Schwind. 2007.
\newblock \showarticletitle{A sequent calculus and a theorem prover for
  standard conditional logics}.
\newblock {\em ACM Trans.\ Comput.\ Log\/} {8}, 4 (2007), 22:1--51.
\newblock


\bibitem[\protect\citeauthoryear{Parikh}{Parikh}{1985}]%
        {Parikh85}
R.~Parikh. 1985.
\newblock \showarticletitle{The logic of games and its applications}.
\newblock {\em Annals of Discrete Mathematics\/}  {24} (1985), 111--140.
\newblock


\bibitem[\protect\citeauthoryear{Pattinson}{Pattinson}{2003}]%
        {Pattinson03}
D.~Pattinson. 2003.
\newblock \showarticletitle{Coalgebraic Modal Logic: Soundness, Completeness
  and Decidability of Local Consequence}.
\newblock {\em Theoret.\ Comput.\ Sci.\/}  {309} (2003), 177--193.
\newblock


\bibitem[\protect\citeauthoryear{Pattinson}{Pattinson}{2004}]%
        {Pattinson04}
D.~Pattinson. 2004.
\newblock \showarticletitle{Expressive Logics for Coalgebras via Terminal
  Sequence Induction}.
\newblock {\em Notre Dame J.\ Formal Logic\/}  {45} (2004), 19--33.
\newblock


\bibitem[\protect\citeauthoryear{Pattinson and Schr{\"{o}}der}{Pattinson and
  Schr{\"{o}}der}{2010}]%
        {PattinsonSchroder10}
D.~Pattinson and L.~Schr{\"{o}}der. 2010.
\newblock \showarticletitle{Cut elimination in coalgebraic logics}.
\newblock {\em Inf.\ Comput.\/}  {208} (2010), 1447--1468.
\newblock


\bibitem[\protect\citeauthoryear{Pauly}{Pauly}{2002}]%
        {Pauly02}
M.~Pauly. 2002.
\newblock \showarticletitle{A Modal Logic for Coalitional Power in Games}.
\newblock {\em J.\ Log.\ Comput.\/}  {12} (2002), 149--166.
\newblock


\bibitem[\protect\citeauthoryear{Peleg}{Peleg}{1987}]%
        {Peleg87}
D.~Peleg. 1987.
\newblock \showarticletitle{Concurrent dynamic logic}.
\newblock {\em J.\ ACM\/}  {34} (1987), 450--479.
\newblock


\bibitem[\protect\citeauthoryear{Pratt}{Pratt}{1976}]%
        {Pratt76}
V.~Pratt. 1976.
\newblock \showarticletitle{Semantical Considerations on {Floyd-Hoare} Logic}.
  In {\em Foundations of Computer Science, FOCS 76}. IEEE, 109--121.
\newblock


\bibitem[\protect\citeauthoryear{Santocanale}{Santocanale}{2008}]%
        {Santocanale08}
L.~Santocanale. 2008.
\newblock \showarticletitle{Completions of {$\mu$}-algebras}.
\newblock {\em Ann.\ Pure Appl.\ Logic\/}  {154} (2008), 27--50.
\newblock


\bibitem[\protect\citeauthoryear{Santocanale and Venema}{Santocanale and
  Venema}{2010}]%
        {SantocanaleVenema10}
L.~Santocanale and Y.~Venema. 2010.
\newblock \showarticletitle{Completeness for flat modal fixpoint logics}.
\newblock {\em Ann.\ Pure Appl.\ Logic\/}  {162} (2010), 55--82.
\newblock


\bibitem[\protect\citeauthoryear{Schr{\"o}der}{Schr{\"o}der}{2007}]%
        {Schroder07}
L.~Schr{\"o}der. 2007.
\newblock \showarticletitle{A Finite Model Construction for Coalgebraic Modal
  Logic}.
\newblock {\em J.\ Log.\ Algebr.\ Prog.\/}  {73} (2007), 97--110.
\newblock


\bibitem[\protect\citeauthoryear{Schr{\"o}der and Pattinson}{Schr{\"o}der and
  Pattinson}{2007}]%
        {SchroderPattinson07}
L.~Schr{\"o}der and D.~Pattinson. 2007.
\newblock \showarticletitle{Modular algorithms for heterogeneous modal logics}.
  In {\em Automata, Languages and Programming, ICALP 07}{\em , LNCS}, {Lars
  Arge}, {Andrzej Tarlecki}, {and} {Christian Cachin} (Eds.), Vol. 4596.
  Springer, 459--471.
\newblock


\bibitem[\protect\citeauthoryear{Schr{\"o}der and Pattinson}{Schr{\"o}der and
  Pattinson}{2009}]%
        {SchroderPattinson09a}
L.~Schr{\"o}der and D.~Pattinson. 2009.
\newblock \showarticletitle{{PSPACE} Bounds for Rank-1 Modal Logics}.
\newblock {\em ACM Trans.\ Comput.\ Log.\/}  {10} (2009), 13:1--13:33.
\newblock


\bibitem[\protect\citeauthoryear{Schr{\"{o}}der, Pattinson, and
  Hausmann}{Schr{\"{o}}der et~al\mbox{.}}{2010}]%
        {SchroderEA10}
L.~Schr{\"{o}}der, D.~Pattinson, and D.~Hausmann. 2010.
\newblock \showarticletitle{Optimal Tableaux for Conditional Logics with
  Cautious Monotonicity}. In {\em European Conference on Artificial
  Intelligence, {ECAI} 2010}{\em , Frontiers Artif.\ Intell.\ Appl.}, {Helder
  Coelho}, {Rudi Studer}, {and} {Michael Wooldridge} (Eds.), Vol. 215. {IOS}
  Press, 707--712.
\newblock
\showISBNx{978-1-60750-605-8}


\bibitem[\protect\citeauthoryear{Schr{\"{o}}der and Venema}{Schr{\"{o}}der and
  Venema}{2010}]%
        {SchroderVenema10}
L.~Schr{\"{o}}der and Y.~Venema. 2010.
\newblock \showarticletitle{Flat Coalgebraic Fixed Point Logics}. In {\em
  Concurrency Theory, {CONCUR} 2010}{\em , LNCS}, {Paul Gastin} {and}
  {Fran{\c{c}}ois Laroussinie} (Eds.), Vol. 6269. Springer, 524--538.
\newblock
\showISBNx{978-3-642-15374-7}


\bibitem[\protect\citeauthoryear{Segerberg}{Segerberg}{1982}]%
        {Segerberg82}
K.~Segerberg. 1982.
\newblock \showarticletitle{A completeness theorem in the modal logic of
  programs}.
\newblock In {\em Universal Algebra and Applications}. Banach Centre
  Publications, Vol.~9. PWN -- Polish Scientific Publishers, Warsaw, 31--46.
\newblock


\bibitem[\protect\citeauthoryear{Walukiewicz}{Walukiewicz}{2000}]%
        {Walukiewicz00}
I.~Walukiewicz. 2000.
\newblock \showarticletitle{Completeness of {K}ozen's Axiomatisation of the
  Propositional {$\mu$}-Calculus}.
\newblock {\em Inf.\ Comput.\/}  {157} (2000), 142--182.
\newblock


\bibitem[\protect\citeauthoryear{Wolper}{Wolper}{1983}]%
        {Wolper83}
P.~Wolper. 1983.
\newblock \showarticletitle{Temporal Logic Can Be More Expressive}.
\newblock {\em Inf.\ Control\/}  {56} (1983), 72--99.
\newblock


\end{thebibliography}

\end{document}